\begin{document}

\title{Improved Submatrix Maximum Queries in Monge Matrices}

\author{
Pawe{\l} Gawrychowski \inst{1} \and Shay Mozes\inst{2}\thanks{Mozes and Weimann  supported in part by Israel Science Foundation grant 794/13.} \and Oren Weimann\inst{3}$^{\ast}$
}
\institute{
MPII, \href{mailto:gawry@mpi-inf.mpg.de}{gawry@mpi-inf.mpg.de}  \and
IDC Herzliya, \href{mailto:smozes@idc.ac.il}{smozes@idc.ac.il}   \and
University of Haifa, \href{mailto:oren@cs.haifa.ac.il}{oren@cs.haifa.ac.il}  
}

\date{}
\maketitle

\begin{abstract}
We present efficient data structures for submatrix maximum
queries in Monge matrices and Monge partial matrices.
For  $n\times n$  Monge matrices, we give a data structure that requires
$O(n)$ space and
answers submatrix maximum queries in $O(\log n)$ time. The best
previous data
structure [Kaplan et al., SODA`12] required $O(n \log n)$ space and $O(\log^2 n)$ query time. 
We also give an alternative data structure with constant query-time and $ O(n^{1+\varepsilon})$ construction time
and space for any fixed $\varepsilon<1$. 
For $n\times n$ {\em partial} Monge matrices we obtain a data
structure with $O(n)$ space  and  $O(\log n \cdot \alpha(n))$ query time.
The data structure of Kaplan et al. required  $O(n \log n \cdot \alpha(n))$ space and $O(\log^2 n)$ query time.

\medskip

Our improvements are enabled by a technique for exploiting the structure of the upper
envelope of Monge matrices to efficiently report column maxima in skewed
rectangular Monge matrices. We hope this technique will  be
useful in obtaining faster search algorithms in Monge
partial matrices.
In addition, we give a linear upper bound on
the number of breakpoints in the upper envelope of a Monge partial matrix. This shows that the inverse Ackermann $\alpha(n)$
factor in the analysis of the data structure of Kaplan et. al is superfluous. 

\end{abstract}

\section{Introduction}

A matrix $M$ is a {\em Monge} matrix if for any pair of rows $i<j$ and columns $k<\ell$ we have that $M_{ik}+ M_{j\ell} \ge M_{i\ell }+ M_{jk}$. 
Monge matrices have many
applications in combinatorial optimization and computational
geometry. For example, they arise in problems involving distances in
the plane~\cite{Hoffman61,Monge,Park91,Tiskin}, and in problems on convex $n$-gons~\cite{SMAWK,AggarwalP88}. 
See~\cite{BKR96} for a survey on Monge matrices and their uses in combinatorial
optimization.

In this paper we consider the following problem: Given an $n \times n$ Monge matrix $M$, construct a data structure that  can report the maximum entry in any query submatrix (defined by a set of consecutive rows and a set of consecutive columns).
Recently, Kaplan, Mozes, Nussbaum and
Sharir~\cite{KaplanMozesNussbaumSharir} presented an $\tilde O(n)$ space\footnote{The $\tilde O(\cdot)$ notation
  hides polylogarithmic factors in $n$.} data structure with $\tilde O(n)$ construction time and $O(\log^2 n)$ query time. They also described an
extension of the data structure to handle {\em partial} Monge matrices (where some of the
entries of $M$ are undefined, but the defined entries in each row
and in each column are
contiguous). The
extended data structure incurs larger polylogarithmic factors in the
space and construction time. 
Both the original and the extended data structures have various important applications. They are
used in algorithms that efficiently find the largest empty rectangle containing a
query point, in dynamic distance oracles for planar graphs,
and in algorithms for maximum flow in planar graphs~\cite{BKMNW11}.
See~\cite{KaplanMozesNussbaumSharir} for more details on the
history of this problem and its applications.

Note that, even though explicitly representing the input matrix
requires $N=\Theta(n^2)$ space, the additional space required by the
submatrix maximum data structure of~\cite{KaplanMozesNussbaumSharir} is only $\tilde O(n)$. In many applications (in particular~\cite{BKMNW11,KaplanMozesNussbaumSharir}), the matrix $M$ is not stored explicitly but any entry of $M$ can be computed when needed in $O(1)$ time. The space required by the application  is therefore dominated by the size of the submatrix maximum data structure. 
With the increasing size of problem instances, and with current memory
and cache architectures, space 
often becomes the most significant resource. 

For general (i.e., not Monge) matrices, a long line of research over the last three decades including~\cite{AFL07,CR1989,DemaineLandauWeimann,GBT84,YuanA10} achieved $\tilde O(N)$ space and $\tilde O(1)$ query data structures, 
culminating with the $O(N)$-space  $O(1)$-query data structure of Yuan
and Atallah~\cite{YuanA10}. 
Here $N=n^2$ denotes the total number of entries in the matrix.  
It is also known~\cite{BrodalDR10} that reducing the space to $O(N/c)$ incurs an $ \Omega(c)$ query-time. Tradeoffs requiring $O(N/c)$ additional space and $\tilde O(c)$ query-time were  given in~\cite{BrodalESA,BrodalDR10}.
When the matrix has only $N=o(n^2)$ nonzero entries,
the problem is known in computational geometry as the {\em orthogonal range searching} problem  on the $n \times n$ grid. In this case as well, various tradeoffs with $\tilde O(N)$-space and $\tilde O(1)$-query appear in a long history of results including
~\cite{AlstrupEtAl,Patrascu,Chazelle88,Munro,GBT84}. In particular, a
linear $O(N)$-space data structure was given by
Chazelle~\cite{Chazelle88} at the cost of an $O(\log^\varepsilon n)$
query time.  See~\cite{Nekrich} for a survey on orthogonal range search.
 
\paragraph{\bf Contribution.} 
Our first contribution is in designing $O(n)$-space $O(\log n)$-query data structures for
submatrix maximum queries in Monge matrices and in partial
Monge matrices (see Section~\ref{sec:linearspace}). Our data
structures improve upon the data structures of Kaplan et al. in both space and query time. 
Consequently, using our data structures for 
finding the largest empty rectangle containing a query point improves the space and query time by logarithmic factors. 

We further provide alternative data structures with
faster query-time; We achieve $O(1)$ query-time at the cost of
$ O(n^{1+\varepsilon})$ construction time and space for an arbitrarily small constant
$0<\varepsilon<1$ (see Section~\ref{sec:fast-query}).
 
Our results are achieved by devising a data structure for reporting column
maxima in $m\times n$ Monge matrices with many more columns than rows ($n >>
m$). We refer to this data structure as the \emph{micro} data structure. The
space required by the micro data structure
is linear in $m$, and independent of $n$. Its construction-time
depends only logarithmically on $n$. 
The query-time is $O(\log \log n)$, the time required
for a predecessor query in a set of integers bounded by $n$. 
We use the micro data structure in the
design of our submatrix maximum query data structures, exploiting its 
sublinear dependency on $n$, and an ability to
trade off construction and query times.

For partial Monge matrices, we provide a tight $O(m)$ upper bound on the complexity of the upper
envelope 
(see Section~\ref{sec:upperbound}). The best
previously known bound~\cite{SA} was $m \alpha(m)$, where $\alpha(m)$ is the
 inverse Ackermann function. This upper bound immediately
implies that the $\alpha(m)$ factor stated in the space and construction
time of the data structures of Kaplan et al. is superfluous.

Notice that the upper envelope of a {\em full} $m\times n$ Monge
matrix also has complexity $O(m)$.  The famous SMAWK
algorithm~\cite{SMAWK} can find all column maxima in $O(n+m)$ time. 
However, this is not the case for partial Monge matrices.
Even for simple partial Monge matrices such as triangular, or
staircase matrices, where it has been known for a long time that the complexity
of the upper envelope is linear, the fastest known algorithm for finding all
column maxima is the $O(n \alpha(m) + m)$ time algorithm of Klawe
and Kleitman~\cite{KlaweK90}. 
We hope that our micro data structure will prove useful for obtaining
a linear-time algorithm. 
The known algorithms, including the $(n \alpha(m) + m)$-time
algorithm of Klawe and Kleitman~\cite{KlaweK90}, partition the matrix
into skewed rectangular matrices, and use the SMAWK algorithm. It is
plausible that our micro data structure will yield a speed up since
it is adapted to skewed matrices.

\section{Preliminaries and Our Results}\label{preliminaries}
In this section we overview the data structures
of~\cite{KaplanMozesNussbaumSharir} and highlight our results.

A matrix $M$ is a {\em Monge} matrix if for any pair of rows $i<j$ and columns $k<\ell$ we have that $M_{ik}+ M_{j\ell} \ge M_{i\ell }+ M_{jk}$. 
A matrix $M$ is {\em totally monotone in columns} if for any pair of
rows $i<j$ and columns $k<\ell$ we have that if $M_{ik}\le
M_{jk}$ then $M_{i\ell }\le M_{j\ell}$. 
Similarly, $M$ is {\em totally monotone in rows} if for any pair of rows $i<j$ and columns $k<\ell$  we have that if $M_{ik}\le M_{i\ell}$ then $M_{jk}\le M_{j\ell}$.  Notice that the Monge property implies total monotonicity (in columns and in rows) but the converse is not true. When we simply say  {\em totally monotone} (or TM) we mean totally monotone  {\em in columns} (our results symmetrically apply to totally monotone  {\em in rows}).

A matrix $M$ is a \emph{partial} matrix if some 
entries of $M$ are undefined, but the defined entries in each row
and in each column are
contiguous.
We assume w.l.o.g. that every row has at least one defined element and that the defined elements form a single connected component (i.e., the defined column intervals in each pair of consecutive rows overlap). If this is not the case then only minor changes are needed in our algorithms. 
A partial TM (resp., Monge) matrix is a partial matrix whose defined entries
satisfy the TM (resp., Monge)
condition.
 
\noindent The following propositions are easy to verify:

\begin{proposition}
\label{prop:adjacent condition}
An $m\times n$ matrix $M$ is Monge iff $M_{i,j}+M_{i+1, j+1} \ge M_{i+1,j}+M_{i,j+1}$ for all $i=1,2,\ldots,m-1$ and $j=1,2,\ldots,n-1$.
\end{proposition}

\begin{proposition}
\label{prop:undef}
If a matrix $M$ is partial Monge, then it remains partial Monge after replacing any element of $M$ by a blank, so long as the defined (non-blank) entries in each row and in each column remain contiguous.
\end{proposition}

\begin{proposition}
\label{prop:duplicate}
If an $m$-by-$n$ matrix $M$ is (partial) Monge, then the $(m+1)$-by-$n$ matrix resulting by replacing any row of $M$ by two identical copies of that row is also (partial) Monge. An analogous statement holds for duplicating any column of $M$. 
\end{proposition}

We consider $m \times n$ matrices, but for simplicity we sometimes state the
results for $n \times n$ matrices. 
For a Monge matrix $M$, denote $r(j)=i$ if the  maximum element in
column $j$ lies in row $i$. (We assume this maximum element is unique.
It is simple to break ties by, say, taking the highest index.)
The {\em upper envelope} $\mathcal{E}$ of all the rows of
$M$ consists of the $n$ values $r(1), \ldots, r(n)$. Since $M$ is Monge we have that $r(1)\le r(2)\le  \ldots \le r(n)$ and so $\mathcal{E}$ can be implicitly represented in $O(m)$ space by keeping only the $r(j)$s of $O(m)$ columns called {\em breakpoints}. Breakpoints are the columns~$j$ where $r(j)\ne r(j+1)$. 
The maximum element $r(\pi)$ of any column $\pi$ can then be retrieved
in $O(\log m)$ time by  a binary search for the first
breakpoint-column $j$ after $\pi$, and  setting $r(\pi)=r(j)$.

The first data structure of~\cite{KaplanMozesNussbaumSharir} is a balanced binary
tree $T_h$ over the rows of $M$. A node $u$ whose subtree contains $k$
leaves (i.e., $k$ rows) stores the $O(k)$ breakpoints of the $k \times
n$ matrix $M^u$  defined by these $k$ rows and all columns of $M$.
A leaf represents a single row and
requires no computation. An internal node $u$ obtains its breakpoints by merging the breakpoints of its two children: its left child $u_1$ and its right $u_2$. 
By the Monge property, the list of
 breakpoints of $u$ starts with a prefix of  breakpoints of
$u_1$ and ends with a suffix of breakpoints of
$u_2$. 
Between these there is possibly one new breakpoint $j$. The prefix and suffix parts can be found easily in $O(k)$ time by linearly comparing the lists of breakpoints of $u_1$ and $u_2$. The new breakpoint $j$ can then be found in additional $O(\log n)$ time via binary search. Summing $O(k + \log n)$ over all 
nodes
of $T_h$ gives $O(m (\log m + \log n))$  time. 
The
total size of $T_h$ is $O(m \log m)$.
 
Note that the above holds even if $M$ is not Monge but only TM.
This gives rise to a 
data structure that answers subcolumn (as opposed to  submatrix) queries:

\medskip

\noindent {\bf Subcolumn queries in TM matrices~\cite{KaplanMozesNussbaumSharir}.} {\em 
Given a $n\times n$ TM  matrix,  one can construct, in $O(n \log  n)$
time, a data structure of size $O(n \log n)$ that reports the maximum
in a query
column and a contiguous range of rows in $O(\log n)$~time. }

\medskip
\noindent The maximum entry in a query column $\pi$ and a contiguous range of rows $R$ is found using $T_h$ by identifying $O(\log m)$ \emph{canonical nodes} of $T_h$.
A node $u$ is canonical if $u$'s set of rows is contained in $R$ but the set of
rows of $u$'s parent is not.
For each such canonical node $u$, we find in $O(\log m)$ time the maximum element in column $\pi$ amongst all the rows of $u$.
The output is the largest of these and  the  total query time is
$O(\log^2 m)$. The query time  can be reduced to
$O(\log m)$ by using fractional cascading~\cite{CGfrac}. 

The first results of our paper improve the above subcolumn query data structure of~\cite{KaplanMozesNussbaumSharir}, as indicated in Table~\ref{table} under subcolumn query in  TM matrices. 
 The next data structure of~\cite{KaplanMozesNussbaumSharir} extends the 
queries from subcolumn to  submatrix  (specified by ranges $R$ of consecutive rows, and $C$ of consecutive columns.)

\medskip
\noindent {\bf Submatrix queries in Monge matrices~\cite{KaplanMozesNussbaumSharir}.} {\em 
Given a $n\times n$ Monge matrix,  one can construct, in $O(n\log  n)$
time, a data structure of size $O(n \log n )$ that  reports the
maximum entry 
in a query
submatrix in $O(\log^2 n))$ time. }

\medskip
\noindent To obtain $O(\log^2 n)=O(\log m(\log m+\log n))$ query time, note that $R$ is the disjoint union of $O(\log m)$ canonical nodes of $T_h$. 
For each such canonical node $u$, we use $u$'s list of breakpoints $\{j_1,j_2,\ldots, j_k\}$ to  find in $O(\log m+\log n)$ time the maximum element in all rows of $u$ and the range of columns $C$. This is done as follows:
we first identify in $O(\log m)$ time the set $\mathcal{I}=\{j_a,j_{a+1},\ldots, j_b\}$
of $u$'s breakpoints that are fully contained in $C$. The columns of $C$ that are to the left of $j_a$ all have their maximum element in row $r(j_a)$. To find the maximum of these we 
construct, in addition to $T_h$, a symmetric binary tree $\mathcal B$ that can report in $O(\log n)$ time the maximum entry in a query {\em row} and a contiguous range of {\em columns}. $\mathcal B$ is built in $O(n(\log  m+\log n))$ time and $O(n \log n)$ space using the subcolumn query data structure on the transpose of $M$. This is possible since $M$ is Monge.\footnote{In fact it 
suffices that $M$ is a TM matrix whose transpose is also TM.}
Similarly, we find in $O(\log n)$ time the maximum in all columns of $C$ that are to the right of  $j_b$. 

To find the maximum in all columns between $j_a$ and $j_b$, 
let $m(j_i)$ denote the maximum element in the columns interval
$(j_{i-1},j_i]$ (note it must be in row $r(j_i)$). 
We wish to find $\max\{ m(j_{a+1}),\ldots, m(j_b)\}$ which corresponds
to a Range Maximum Query in the array $A^u = \{m(j_1), \ldots,
m(j_k)\}$. We compute the array $A^u$ (along with a naive RMQ data structure with logarithmic query time) of every node $u$ during the construction of $T_h$. Most of the entries of $A^u$ are simply copied from $u$'s children arrays $A^{u_1}$ and $A^{u_2}$. The only new $m(\cdot)$ value that $u$ needs to compute is for the single new breakpoint $j$ (that is between the prefix from $u_1$ and the suffix from $u_2$). Since $m(j)$ must be in row $r(j)$ it can be  computed in $O(\log n)$ time by a single query to $\mathcal B$.
 
Overall, we get a query time of $O(\log m + \log n)$ per canonical node $u$ for a total of $O(\log m(\log m + \log n))$.  Building $T_h$ (along with all the RMQ arrays $A^u$) and $\mathcal B$ takes total 
$O((m+n)(\log m+ \log n))$ time and  $O(m\log m+n\log n)$ space. Our two improvements to this bound  of~\cite{KaplanMozesNussbaumSharir}  are stated in Table~\ref{table} under submatrix queries in Monge matrices.

 The next data structures of~\cite{KaplanMozesNussbaumSharir} extend
 the above subcolumn and submatrix data structures from full to {\em
   partial}  TM matrices. The construction is very
 similar. Merging
the breakpoints of the two children $u_1$, $u_2$ of a node $u$
of $T_h$ is slightly more involved now, since the envelopes may cross
each other multiple times. 
The number of breakpoints of any subset of consecutive $k$ rows is $O(k \cdot \alpha(k))$~\cite{SA}, and so there are
$O(m \log m \cdot \alpha(m)  )$ breakpoints in total over all nodes of $T_h$ (as opposed to $O(m)$ in full matrices). This implies the following

\medskip

\noindent {\bf Subcolumn queries in partial TM matrices~\cite{KaplanMozesNussbaumSharir}.} {\em 
 Given a partial TM $n\times n$ matrix,  one can construct, in $O(n
 \log^2 n \cdot \alpha(n) )$ time, a data structure of size $O(n  \log
 n \cdot \alpha(n))$ that reports the maximum entry 
in a query
column and a contiguous range of rows in $O(\log n)$ time. }

\medskip 

\noindent We improve this data structure to the same bounds we get for full matrices. i.e, we show that our bounds for full  matrices also apply to partial  matrices. This is stated in Table~\ref{table} under  subcolumn query in Partial TM matrices.
Finally,~\cite{KaplanMozesNussbaumSharir} extended their submatrix data structure  from full to partial Monge matrices. 
It uses a similar construction of $T_h$ and $\mathcal{B}$ as in the case of full matrices, but again requires the additional $O(\log m \cdot \alpha(m) +\log n \cdot \alpha(n))$ multiplicative  factor to store the breakpoints 
of all nodes of $T_h$ and $\mathcal B$.


\begin{table}[h]\small
\hskip1cm\begin{tabular}{|l|l|l|l|l|l|}
\hline
 property &  query type & space & construction time & query time & \\
  \noalign{\global\arrayrulewidth0.05cm}
\hline 
 \noalign{\global\arrayrulewidth0.4pt} 
TM & subcolumn & $O(n \log n)$ & $O(n \log n)$ &
$O(\log n)$ &  Lemma 3.1 in~\cite{KaplanMozesNussbaumSharir} \\ \hline

 TM & subcolumn & $O(n)$ & $O(n\log n/\log\log n)$ &
$O(\log n)$ & Lemma~\ref{lemma:3.1improvement} here \\ \hline

 TM & subcolumn & $O(n^{1+\varepsilon})$ & $O(n^{1+\varepsilon})$ &
$O(1)$ & Lemma~\ref{lemma:3.1fast-query} here\\ \hline 

 \noalign{\global\arrayrulewidth0.05cm}
\hline 
 \noalign{\global\arrayrulewidth0.4pt}

Monge & submatrix & $O(n \log n)$ & $O(n\log n)$ &
$O(\log^2 n)$ & Theorem 3.2 in~\cite{KaplanMozesNussbaumSharir} \\ \hline

Monge & submatrix & $O(n)$ & $O(n\log n)$ &
$O(\log n)$ & Theorem~\ref{theorem:3.2improvement} here  \\ \hline

Monge & submatrix & $O(n)$ & $O(n\log n/\log\log n)$ &
$O(\log^{1+\varepsilon} n)$ & Corollary~\ref{cor:3.2improvement} here  \\ \hline

Monge & submatrix & $O(n^{1+\varepsilon})$ & $O(n^{1+\varepsilon})$ &
$O(1)$ & Theorem~\ref{theorem:3.2fast-query} here   \\ \hline 

 \noalign{\global\arrayrulewidth0.05cm}
\hline 
 \noalign{\global\arrayrulewidth0.4pt} 
Partial TM & subcolumn &$O(n\log n \cdot \alpha(n))$ & $O(n\log ^2 n \cdot
\alpha(n))$ &
$O(\log n)$ &Lemma 3.3 in~\cite{KaplanMozesNussbaumSharir} \\
\hline

 Partial TM & subcolumn & $O(n)$ & $O(n\log n/\log\log n)$ &
$O(\log n)$ & Lemma~\ref{lemma:filltheblanks} here \\ \hline

 Partial TM & subcolumn & $O(n^{1+\varepsilon})$ & $O(n^{1+\varepsilon})$ &
$O(1)$ & Lemma~\ref{lemma:filltheblanks} here\\ \hline 

 \noalign{\global\arrayrulewidth0.05cm}
\hline 
 \noalign{\global\arrayrulewidth0.4pt}

Partial Monge & submatrix & $O(n\log n \cdot \alpha(n))$ & $O(n\log ^2 n \cdot
\alpha(n))$ &
$O(\log^2 n)$ &Theorem 3.4 in~\cite{KaplanMozesNussbaumSharir} \\
\hline

Partial Monge & submatrix & $O(n)$ & $O(n\log n)$ &
$O(\log n \cdot \alpha(n))$ & Theorem~\ref{theorem:3.4improvement} here \\
\hline

Partial Monge & submatrix & $O(n)$ & $O(n\log n / \log\log n)$ &
$O(\log^{1+\varepsilon} n \cdot \alpha(n))$ & Corollary~\ref{cor:3.4improvement} here \\
\hline

\end{tabular}

  \caption{Our results compared to~\cite{KaplanMozesNussbaumSharir}. }
\label{table}
\end{table}


\medskip
\noindent {\bf Submatrix queries in partial Monge matrices~\cite{KaplanMozesNussbaumSharir}.} {\em 
Given a $n\times n$ partial Monge matrix,  one can construct, in
$O(n\alpha(n)  \log^2 n )$ time, a data structure of size $O(n
\alpha(n) \log n)$ that reports the maximum entry 
in a query
submatrix in  $O(\log^2 n)$~time. }

\medskip

\noindent We remove the $O(\log n \cdot \alpha(n))$ multiplicative
factor and obtain the bounds stated in the bottom of
Table~\ref{table}. The $\alpha(n)$ factor is removed by showing that 
the number of breakpoints in the upper envelope
of a partial  Monge matrix is linear. 

\section{Linear-Space Data Structures}\label{sec:linearspace}
In this section we present our data structures that improve the space
to $O(n)$ and the query time to $O(\log n)$.
We begin by introducing a new data structure for the case where a query is composed of an {\em entire} column (as opposed to a range of rows). This new data structure (which we call the micro data structure) is designed to work well when the number of rows in the matrix is much smaller than the number of columns.
We denote by $pred(x,n) = O(\min\{\log x, \log\log n\})$ the time to query a predecessor data structure with $x$ elements from $\{1,\ldots,n\}$.

\begin{lemma} [the micro data structure]\label{lemma:micro}
Given a $x\times n$ TM matrix and $r >0$, one can construct in $O(x \log n /\log r)$ time, a data structure of size $O(x)$ that given a query column can report the maximum entry in the entire column in 
$O(r+pred(x,n))$ time. 
\end{lemma}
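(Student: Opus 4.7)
The data structure will store a compact representation of the upper envelope $\mathcal{E}$ of the $x$ rows. By the total monotonicity property, any two rows cross at most once, so $\mathcal{E}$ has at most $x-1$ breakpoints $c_1 < \cdots < c_k$, with a well-defined dominating row on each segment $[c_i, c_{i+1})$. An exact encoding thus already fits in $O(x)$ space. However, computing the exact breakpoint positions appears to require $\Omega(x \log n)$ time (as in the incremental row-insertion algorithm recalled in Section~\ref{preliminaries}), which is too slow when $r$ is large. Instead, I compute an \emph{approximate} envelope in which each $c_i$ is localized to within an interval of $r$ consecutive columns, together with the associated dominating rows on either side. These $O(x)$ approximate positions are stored in a linear-space predecessor data structure over $[1,n]$, giving total space $O(x)$ and supporting predecessor queries in $O(pred(x,n))$ time.

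\textbf{Query.}
Given a query column $c$, I first use a predecessor search among the $O(x)$ approximate breakpoints to locate the approximate segment containing $c$; this takes $O(pred(x,n))$ time. Since true breakpoints are known only up to $r$ columns, the true dominating row at $c$ may be the dominator of any of up to $O(r)$ consecutive segments whose approximate breakpoints lie within $r$ columns of $c$. I evaluate the matrix entry $M_{i,c}$ for each of these $O(r)$ candidate rows and return the maximum, for $O(r)$ additional work. Total query time is $O(r + pred(x,n))$.

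\textbf{Construction.}
The approximate envelope is built by hierarchical $r$-way refinement. First, apply SMAWK to the $x \times O(r)$ submatrix induced by $O(r)$ evenly-spaced sample columns to obtain an initial envelope in which each breakpoint is localized to an interval of length at most $n/r$. Then iteratively refine: each step subdivides every localization interval into $r$ equal sub-intervals and, using a small SMAWK call on the candidate rows restricted to that interval, identifies the correct sub-interval. After $\log_r(n/r) = O(\log n / \log r)$ refinement levels, every breakpoint is localized to within $r$ columns. By the TM property, the total work at each refinement level is $O(x)$: each breakpoint is associated with only a constant number of candidate rows, and only a constant number of matrix probes per breakpoint per level are needed. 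Summing over $O(\log n / \log r)$ levels gives total construction time $O(x \log n / \log r)$.

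\textbf{Main obstacle.}
The main technical difficulty is controlling the work per refinement level. This requires a delicate invariant, driven by total monotonicity, that ensures (i) each breakpoint stays associated with only a constant number of candidate rows across all levels, and (ii) probes performed at one level can be reused at deeper levels without re-examining already-resolved portions of the matrix. Making this amortization precise, and handling boundary effects where the initial coarse sampling could miss breakpoints entirely, is the crux of the proof.
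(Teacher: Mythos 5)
Your skeleton (evenly spaced sample columns resolved by SMAWK, recursive refinement, a predecessor structure over $O(x)$ stored columns, and an $O(r)$ scan at query time) matches the paper, but the invariant you maintain is the wrong one, and both your query bound and your construction bound fail because of it. You localize every breakpoint to within $r$ \emph{columns}; the paper's structure instead guarantees that between consecutive stored special columns the \emph{row index} of the column maximum changes by at most $r$ --- the slack is measured in rows, not columns. This matters twice. First, for the query: breakpoints can cluster, so up to $x-1$ of them may lie within $r$ columns of the query column $c$, in which case the set of rows that could dominate $c$ has size $\Theta(x)$, not $O(r)$; your claim that only "$O(r)$ consecutive segments" are candidates does not follow from column-localization. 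Second, for the construction: localizing a single isolated breakpoint --- a crossing between two fixed rows --- to within $r$ columns out of $n$ requires $\Omega(\log(n/r))$ comparisons by an adversary argument, so localizing all $x$ breakpoints costs $\Omega(x\log(n/r))$, which exceeds the claimed $O(x\log n/\log r)$ whenever $r=\omega(1)$ is small (e.g.\ $r=\log^{\varepsilon}m$ as needed for Lemma~\ref{lemma:3.1improvement}). Concretely, both invariants you flag as the crux are false: after the coarse pass with $O(r)$ sample columns a single gap can contain up to $x-1$ breakpoints and $x$ candidate rows, not $O(1)$; and refining a gap by an $r$-way subdivision costs $\Omega(r)$ (SMAWK on $x_i$ rows and $r$ columns is $\Theta(x_i+r)$) or at best $\Omega(\log r)$ probes per isolated breakpoint, so with up to $\Theta(x)$ active gaps a level costs $\Theta(xr)$, not $O(x)$.

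The fix, which is the actual content of the paper's proof, is to couple the subdivision factor of a gap to the number of candidate rows in it, and to \emph{stop refining} gaps that are already cheap to query. The first $x$ special columns are $1,n/x,\dots,n$ (note: $x$ of them, not $r$); SMAWK on the induced $x\times x$ matrix gives $r(j_1)\le\dots\le r(j_x)$. A gap with $x_i=r(j_{i+1})-r(j_i)\le r$ is never touched again --- its breakpoints are never column-localized, and a query falling there simply scans the $\le r$ candidate rows. A gap with $x_i>r$ receives $x_i$ (not $r$) new evenly spread special columns, resolved by SMAWK on an $x_i\times x_i$ submatrix in $O(x_i)$ time. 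Each level then costs $O(\sum_i x_i)=O(x)$ because each row belongs to at most two gaps, while each gap's column count still shrinks by a factor $x_i>r$, so the depth is $O(\log_r n)$. A final left-to-right pruning pass discards special columns whose $r(\cdot)$ value repeats, leaving $O(x)$ of them with consecutive $r(\cdot)$ values differing by at most $r$ --- exactly the property that makes the $O(r+pred(x,n))$ query valid. Without tying the branching factor to $x_i$ and without the early stopping rule, the amortization you are hoping for cannot be made to work.
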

\begin{proof}
Out of all $n$ columns of the input matrix $M$, we will designate
$O(x)$ columns as {\em special} columns. For each of these special columns we will eventually compute its maximum element.  
The first $x$ special columns of $M$ are  columns $1, n/x, 2n/x, 3n/x,\ldots, n$ and are denoted $j_1,\ldots, j_x$. 

Let $X$ denote the $x \times x$ submatrix obtained by taking all $x$
rows but only the $x$ special columns $j_1,\ldots, j_x$. It is easy to verify that $X$
is TM. We can therefore run the SMAWK
algorithm~\cite{SMAWK} on $X$ in  $O(x)$ time and obtain the column
maxima of all special columns. 
Let $r(j)$ denote the row containing the maximum element in column $j$.
Since $M$ is TM, the $r(j)$ values are monotonically
non-decreasing. Consequently, $r(j)$ of a non-special column $j$ must be
between $r(j_i)$ and $r(j_{i+1})$ where $j_i<j$ and $j_{i+1}>j$ are
the two special  columns bracketing $j$  
 (see Figure~\ref{fig}). 
 
\begin{figure}[h!]
   \centering
   \includegraphics[scale=0.3]{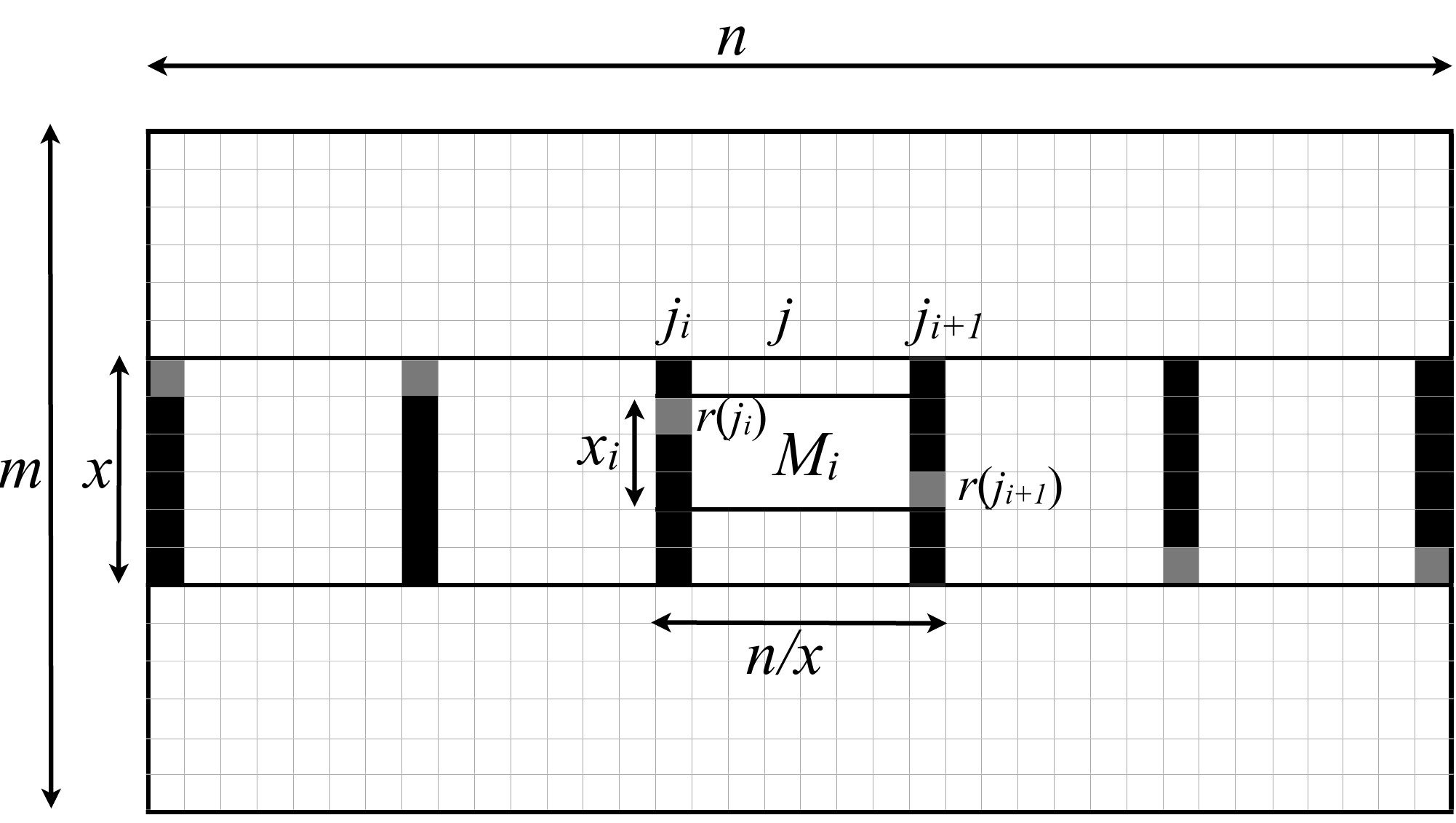}
   \caption{An $x\times n$ matrix inside an $m\times n$ matrix. The
     black columns are the first $x$ special columns. The
     (monotonically  non-decreasing) gray cells inside these special
     columns are the column maxima (i.e., the $r(j_i)$ values of breakpoints $j_i$). The maximum element of column $j$ in the $x\times n$ matrix must be between $r(j_i)$ and $r(j_{i+1})$ (i.e., in matrix $M_i$).}
  \label{fig}
 \end{figure}

For every $i$, let $x_i = r(j_{i+1}) - r(j_i)$. If $x_i  \le r$ then
{\em no} column between $j_i$ and $j_{i+1}$ will ever be a special
column. When we will query such a column $j$ we can simply check
(at query-time) the $r$ elements of $j$ between rows $r(j_i)$ and
$r(j_{i+1})$ in $O(r)$ time. If, however, $x_i  > r$, then we designate
more special columns between $j_i$ and $j_{i+1}$. This is done
recursively on the $x_i  \times (n/x)$ matrix $M_i$ composed of
rows $r(j_i),\ldots, r(j_{i+1})$ and columns $j_i,\ldots,
j_{i+1}$. That is, we mark $x_i$ evenly-spread columns of $M_i$ as
special columns,  and run SMAWK in $O(x_i)$ time on the $x_i \times
x_i$ submatrix $X_i$   obtained by taking all $x_i$ rows but only
these $x_i$ special columns. We continue recursively until either $x_i
\le r$ or the number of columns in $M_i$ is at most $r$. In the
latter case, before terminating, the recursive call runs SMAWK in $O(x_i+r)=O(x_i)$ time on the $x_i
\times r$ submatrix $X_i$ obtained by taking the $x_i$ rows and {\em
  all} columns of $M_i$ (i.e., all columns of $M_i$ will become
special). 

After the recursion terminates, every column $j$ of $M$ is either
special (in which case we computed its maximum), or its maximum  is
known to be in one of at most $r$ rows (these rows are specified by the $r(\cdot)$
values of the two special columns bracketing $j$). 
Let $s$ denote the total number of columns that are marked as special. 
We claim that $s = O(x \log n /\log r)$. To see this, notice that the
number of columns in every recursive call decreases by a factor of at
least $r$ and so the recursion depth is $O(\log_r n) = O(\log n /\log
r)$. In every recursive level, the number of added special columns is
$\sum x_i$ over all $x_i's$ in this level that are at least $r$. In
every recursive level, this sum is bounded by $2x$ because each one of
the $x$ rows of $M$ can appear in at most two $M_i$'s  (as the last
row of one and the first row of the other). Overall, we get $2x \cdot
O(\log n /\log r) = O(x \log n /\log r)$. 

Notice that $s = O(x \log n /\log r)$ implies that the total time
complexity of the above procedure is also $O(x \log n /\log r)$. This
is because whenever we run SMAWK on a $y \times y$ matrix it takes
$O(y)$ time and $y$ new columns are marked as special.  To complete
the construction, we go over the $s$ special columns from left to right in
$O(s)$ time and throw away (mark as non-special) any column whose
$r(\cdot)$ value is the same as that of the preceding special column. This
way we are left with only $O(x)$ special columns, and the difference in
$r(\cdot)$ between consecutive special columns is at least $1$ and at most
$r$. In fact, it is easy to maintain $O(x)$ (and not $O(s)$) space
{\em during} the construction by only recursing on sub matrices $M_i$
where $x_i >1$. We note that when $r=1$, the eventual special columns are
exactly the set of breakpoints of the input matrix $M$.

The final data structure is a predecessor data structure that holds
the $O(x)$ special columns and their associated $r(\cdot)$
values. Upon query of some column $j$, we search in $pred(x,n)$ time
for the predecessor and successor of $j$ and obtain the two $r(\cdot)$
values. We then  search for the maximum of column $j$ by explicitly
checking all the (at most $r$) relevant rows of column $j$. The query time is
therefore  $O(r+pred(x,n))$ and the space $O(x)$.  
\qed \end{proof}

\paragraph{\bf A linear-space subcolumn  data structure.} 

\begin{lemma}\label{lemma:3.1improvement}
Given a $m\times n$ TM matrix, one can construct, in $O(m(\log n+
\log m)/\log\log m)$ time, a data structure of size $O(m)$ that can report the maximum entry in a query
column and a contiguous range of rows in $O(\log m)$ time.
\end{lemma}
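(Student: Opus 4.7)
The plan is to partition the $m$ rows of $M$ into $\Theta(m/B)$ blocks of $B = \Theta(\log m)$ consecutive rows, build an intra-block micro data structure at the bottom, and use a balanced binary tree over blocks at the top to combine block-level information. The overall bounds then arise because each tree level contributes only $O(m/B) = O(m/\log m)$ block-level breakpoints and there are $O(\log m)$ levels.

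At the bottom I instantiate Lemma~\ref{lemma:micro} on each block's $B \times n$ submatrix with parameter $r = \Theta(\log m)$, giving $O(B)$ space per block, $O(B\log n/\log\log m)$ construction per block, and $O(r + \mathrm{pred}(B,n))$ intra-block full-column query time; summed across blocks this is $O(m)$ space and $O(m\log n/\log\log m)$ construction. At every internal node $u$ of the tree $T$ with $k_u$ blocks in its subtree, I store the \emph{block-level upper envelope}: the staircase that specifies, as a function of column index, which of $u$'s blocks contains the column-maximum over $u$'s rows. Because $M$ is totally monotone, the row-argmax (and hence the block-argmax) is monotone non-decreasing in the column index, so this envelope has at most $k_u$ breakpoints; summing $O(k_u)$ over $T$ gives $O((m/B)\log(m/B)) = O(m)$ upper-level space when $B = \Theta(\log m)$. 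These envelopes are built bottom-up by merging each internal node's two children's envelopes via a simultaneous monotone scan of length $O(k_u)$ in which each comparison between candidate blocks is resolved by a micro query; the total construction work stays within $O(m(\log n + \log m)/\log\log m)$.

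A query on column $c$ and row range $R$ handles the at most two partial boundary blocks of $R$ by scanning their $O(B) = O(\log m)$ entries in column $c$ directly, and decomposes the full-block interior into $O(\log(m/B)) = O(\log m)$ canonical nodes of $T$. At each canonical node $u$, I use $u$'s envelope to identify the block $C^\star_u$ whose column-$c$ maximum is largest among $u$'s blocks, and then query $C^\star_u$'s micro data structure for the actual maximum. The answer is the overall best candidate.

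The main obstacle is reducing the naive $O(\log^2 m)$ query time (an $O(\log m)$ binary search plus one micro query at each of $O(\log m)$ canonical nodes) to $O(\log m)$. I would apply fractional cascading~\cite{CGfrac} along root-to-leaf paths of $T$, with the bridges extended into the micro structures at the leaf blocks, so that both the $O(\log m)$ block identifications and the $O(\log m)$ intra-block predecessor lookups cost $O(\log m)$ in aggregate; the cascading is justified by the fact that a parent's block-level envelope is a monotone coarsening of its children's envelopes. A secondary subtlety, absent in the Monge case of Kaplan et al., is that merging two block-level envelopes in a TM matrix is not a ``prefix-plus-one-breakpoint-plus-suffix'' operation but a genuine monotone-staircase merge; fortunately the $O(k_u)$ output-size bound still suffices. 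A final delicate point is coordinating the choice of $r$ so that neither construction nor query is hurt; with $B = \Theta(\log m)$ and $r$ chosen consistently, all the stated bounds balance out.
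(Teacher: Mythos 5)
Your high-level architecture matches the paper's: blocks of $\Theta(\log m)$ consecutive rows, a micro structure (Lemma~\ref{lemma:micro}) inside each block, and a Kaplan-et-al.-style tree over the block-maxima matrix on top. But there is a genuine gap in the query bound, and it comes from your choice of the parameter $r$. With $r=\Theta(\log m)$, a single micro query costs $O(r+pred(B,n))=\Theta(\log m)$, and the dominant term is the \emph{brute-force scan of up to $r$ candidate rows}, not the predecessor lookup. Your query algorithm performs one micro query at each of the $\Theta(\log m)$ canonical nodes, so the scans alone cost $\Theta(\log^2 m)$. Fractional cascading cannot rescue this: it accelerates the chain of predecessor searches in the sorted breakpoint/special-column lists, but it does nothing for the $O(r)$ exhaustive scans, which touch disjoint sets of matrix entries and have no shared search structure. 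The paper instead takes $r=x^{\varepsilon}=\log^{\varepsilon}m$ for a fixed $\varepsilon<1$, which keeps each evaluation of an entry of the derived matrix $M'$ at $O(\log^{\varepsilon}m)$ while still giving construction time $O(x\log n/\log r)=O(x\log n/\log\log m)$ per block; this is the knob that balances construction against query, and your setting of it breaks the query side. (The same choice also hurts your construction accounting: resolving each envelope-merge comparison by a $\Theta(\log m)$-time micro query over the $O(m)$ total breakpoints in the tree already costs $\Theta(m\log m)$, exceeding the claimed $O(m\log m/\log\log m)$ term.)

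Two secondary points. First, your ``subtlety'' about TM versus Monge is a misconception: for a \emph{full} matrix that is totally monotone in columns, the upper envelope of the union of an upper row-set and a lower row-set is a prefix of the upper set's envelope followed by a suffix of the lower set's envelope with at most one new crossing breakpoint --- the paper states this explicitly (``the above holds even if $M$ is not Monge but only TM''); multiple crossings occur only for \emph{partial} TM matrices. Second, your merge procedure --- ``a simultaneous monotone scan of length $O(k_u)$'' --- does not actually locate the new breakpoint: the crossing column lies somewhere among up to $n$ columns and must be found by a binary search over column indices (costing $O(\log n)$ comparisons, each an evaluation of $M'$ entries), which is exactly how the paper charges the $O(\log n)$ term per node. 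Neither of these is fatal on its own, but the first leads you to an unnecessarily complicated merge and the second leaves the construction of the tree's breakpoint lists incompletely specified. The core fix you need is to replace $r=\Theta(\log m)$ by $r=\log^{\varepsilon}m$ and then redo the query-time accounting with the resulting $O(\log^{\varepsilon}m)$ cost per $M'$-entry evaluation.
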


\begin{proof}
Given an $m\times n$ input matrix $M$ we partition it into $m/x$ matrices $M^1, M^2, \ldots, M^{m/x}$ where $x=\log m$. 
Every $M^i$ is an $x\times n$ matrix composed of $x$ consecutive rows
of $M$. We construct  the micro data structure of
Lemma~\ref{lemma:micro} for each $M^i$ separately choosing
$r=x^\varepsilon$ for any constant $0\!<\!\varepsilon\!<\!1$. This requires
$O(x \log n /\log r) = O(x \log n /\log x) $ construction time per
$M^i$ for a total of $O(m \log n /\log \log m)$ time. We obtain a
(micro) data structure of total size $O(m)$ that upon query $(i,j)$
can report in  $O(x^\varepsilon+pred(x,n)) = O(\log^\varepsilon m)$
time the maximum entry in column $j$ of $M^i$.

Now, consider the $ (m/x) \times n$ matrix $M'$, where $M'_{ij}$ is
the maximum entry in column $j$ of $M^i$. We cannot afford to store
$M'$ explicitly, however, using the micro data structure we can
retrieve any entry $M'_{ij}$ in $O(\log^\varepsilon m)$ time. We next
show that $M'$ is also TM. 

 For any pair of rows $i<j$ and any pair of columns $k<\ell$ we need to show that if $M'_{ik}\le M'_{jk}$ then $M'_{i\ell }\le M'_{j\ell}$. Suppose that $M'_{ik}, M'_{jk}, M'_{i\ell }$, and $M'_{j\ell}$ correspond to entries $M_{ak}, M_{bk}, M_{c\ell }$, and $M_{d\ell}$ respectively. We assume that $M_{ak}\le M_{bk}$ and we need to show that  $M_{c\ell }\le M_{d\ell}$. Notice that $M_{ck} \le M_{ak}$ because $M_{ak}$ is the maximal  entry in column $k$ of $M^i$ and $M_{ck}$ is also an entry in column $k$ of $M^i$. Since $M_{ck} \le M_{ak}$ and $M_{ak}\le M_{bk}$ we have that $M_{ck} \le M_{bk}$. 
Since $M_{ck} \le M_{bk}$, from the total monotonicity of $M$, we have  that  $M_{c\ell} \le M_{b\ell}$. Finally, we have  $M_{b\ell} \le M_{d\ell}$ because $M_{d\ell}$ is the maximal  entry in column $\ell$ of $M^j$ and $M_{b\ell}$ is also an entry in column $\ell$ of $M^j$. We conclude that $M_{c\ell} \le M_{d\ell}$. 

Now that we have established that the matrix $M'$ is TM, we can
use the subcolumn data structure of~\cite{KaplanMozesNussbaumSharir} (see previous section) on $M'$. Whenever an entry $M'_{ij}$ is desired, we can retrieve it using the micro data structure. This gives us the macro data structure: it is of size $O(m/x\cdot \log(m/x)) = O(m)$ and  can report in $O(\log m)$ time the maximum entry of $M'$ in a query column and a contiguous range of rows. 
It is
built in $O(m/x \cdot (\log(m/x) +\log  n)\cdot x^\varepsilon  )$ time which is  $O(m(\log  n+
\log m)/\log\log m)$  for any choice of $\varepsilon<1$. 

To complete the proof of Lemma~\ref{lemma:3.1improvement} we need to show how to answer a general query in $O(\log m)$ time. Recall that a query is composed of a column of $M$ and a contiguous range of rows. If the range is smaller than $\log m$ we can simply check all elements explicitly  in $O(\log m)$ time and return the maximum one. Otherwise, the range is composed of three parts: a prefix part of length at most  $\log m$, an infix part that corresponds to a range in $M'$, and a suffix part of length at most  $\log m$. The prefix and suffix are computed explicitly in $O(\log m)$ time. The infix is computed by querying the macro data structure in $O(\log m)$ time.
\qed \end{proof}
\paragraph{\bf A linear-space submatrix  data structure.} 

\begin{theorem}\label{theorem:3.2improvement}
Given a $m\times n$ Monge matrix,  one can construct, in $O((m+n)(\log  n+
\log m))$ time, a data structure of size $O(m+n)$ that can report the maximum entry in a query
submatrix in $O(\log m+\log n)$ time. 
\end{theorem}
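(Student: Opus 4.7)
My plan is to build on the two-tree framework of Kaplan et al.\ but replace each of its expensive components with a linear-space analog and trim the row tree using a micro-level reduction. First, using Lemma~\ref{lemma:3.1improvement} applied to $M$ and to $M^T$, I would obtain a subcolumn structure of size $O(m)$ answering queries in $O(\log m)$ time and a subrow structure $\mathcal B$ of size $O(n)$ answering queries in $O(\log n)$ time. Together these cost $O(m+n)$ space and play the roles of the column/row range helpers that Kaplan invokes from every canonical node of the row tree $T_h$.

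The main space bottleneck in Kaplan's construction is that $T_h$ stores $\Theta(|u|)$ breakpoints and an RMQ array $A^u$ at each node $u$, summing to $\Theta(m\log m)$. To shrink this to $O(m)$, I would partition the rows of $M$ into $m/x$ consecutive strips of $x = \log m$ rows, install a micro data structure (Lemma~\ref{lemma:micro}) in each strip, and then run Kaplan's $T_h$-and-$\mathcal{B}$ construction only on the virtual $(m/x)\times n$ reduced matrix $M'$ whose entries are the per-strip column maxima. As in Lemma~\ref{lemma:3.1improvement}, $M'$ is TM; a small additional argument (threading the TM-in-rows and TM-in-cols properties of the Monge $M$ through the per-strip maximizer indices) shows that $(M')^T$ is also TM, so Kaplan's submatrix framework applies to $M'$. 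The breakpoints across all nodes of this macro $T_h$ then sum to $O((m/x)\log(m/x))=O(m)$, and each required value $A^u[i]$ is an entry of $M'$ that is computed on demand in $O(\log^{\varepsilon} m + \log\log n)$ time via the relevant micro structure.

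A query on a submatrix $R\times C$ is answered by splitting $R$ into (i) a middle range of strips fully contained in $R$, handled by the macro structure on $M'$, and (ii) up to two boundary partial strips of at most $x$ rows, handled by querying the micro structures at the two endpoints of $C$ together with one probe of $\mathcal B$ per boundary. The main obstacle is shaving the query time of the macro part from the $O(\log^2(m+n))$ of Kaplan's submatrix structure down to $O(\log m + \log n)$. I would handle this by combining fractional cascading \`a la Chazelle-Guibas across the $O(\log (m/x))$ canonical nodes produced by the macro decomposition of $R$ with the $O(\log^\varepsilon m+\log\log n)$ per-call cost of the micro structure, so that the sequence of breakpoint binary searches and subrow probes into $\mathcal B$ triggered at the canonical nodes can be executed in $O(\log m+\log n)$ aggregate time rather than per canonical node. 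The construction-time accounting is routine: each of $\mathcal B$, the $m/x$ micro structures, and the macro $T_h$ on $M'$ costs $O((m+n)(\log m+\log n))$ in total.
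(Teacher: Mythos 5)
Your skeleton matches the paper's: reduce $M$ to the $(m/x)\times n$ strip-maxima matrix $M'$ with $x=\log m$ via the micro structures, build the Kaplan et al.\ tree $T_h$ on $M'$, use Lemma~\ref{lemma:3.1improvement} on $M^T$ as $\mathcal B$, and use fractional cascading for the breakpoint searches. But there is a genuine gap in how you handle everything that is \emph{not} covered by the macro query on $M'$, and it is exactly there that the $O(\log m+\log n)$ query bound is at stake. Your boundary partial strips are $O(\log m)\times|C|$ submatrices of $M$ with $|C|$ as large as $n$; a micro query at an endpoint of $C$ returns a single column maximum over a whole strip, and a probe of $\mathcal B$ covers a single row of $M$ over a column range, so neither (nor one of each per boundary) answers a $\log m$-row-by-$|C|$-column range. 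Doing it row by row costs $O(\log m\cdot\log n)$. The same problem recurs inside the macro query: at each canonical node of $T_h$, the columns of $C$ outside $[j_a,j_b]$ have their maxima in a known \emph{row of $M'$}, i.e.\ in a known block of $x$ rows of $M$ over a column range, and fractional cascading does not make these residual searches cheap --- it only speeds up the breakpoint location.

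The paper closes this gap with two ingredients you are missing. First, it builds a \emph{symmetric} structure $T_v$ on the column-reduced $m\times(n/\log n)$ matrix $M''$, so that after both reductions the uncovered region consists of four $O(\log m)\times O(\log n)$ corners (dispatched by SMAWK) plus a set $A$ of $2\log m$ leftover rows of $M'$ and a set $B$ of $2\log n$ leftover columns of $M''$. Second, the joint residual over $A\times B$ is packaged as an $O(\log m)\times O(\log n)$ Monge matrix $\bar M$, where $\bar M_{ij}$ is the maximum of $M$ over the $x\times x$ block given by row $a_i$ of $M'$ and column $b_j$ of $M''$; SMAWK then finds its maximum in $O(\log m+\log n)$ time, provided each entry of $\bar M$ is computable in $O(1)$ time. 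That constant-time evaluation requires extra preprocessing you do not have: the upper envelope of each strip $M^i$ (and of each column-strip of $M^T$) stored in an atomic heap for $O(1)$ predecessor queries, together with an RMQ array over the envelope intervals. This preprocessing costs $O(m\log n)$, which is also why the theorem's construction time is $O((m+n)(\log m+\log n))$ with no $\log\log$ saving (the saving is recovered only in Corollary~\ref{cor:3.2improvement} at the price of a $\log^{\varepsilon}$ factor in the query) --- so your closing remark that the construction accounting is routine also hides the one place where the bound genuinely changes.
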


\begin{proof}
Recall from Section~\ref{preliminaries} that the  submatrix data structure of~\cite{KaplanMozesNussbaumSharir} 
 is composed of the tree $T_h$ over the rows of $M$ and the tree
 $\mathcal B$ over the columns of $M$. Every node $u\in T_h$ stores
 its breakpoints along with the RMQ array $A^u$ (where $A^u[j]$ holds
 the value of the maximum element between the $(j-1)$'th and the
 $j$'th breakpoints of $u$). If $u$ has $k$ breakpoints then they are
 computed along with $A^u$ in $O(k+\log n)$ time: $O(k)$  to copy from
 the children of $u$ and $O(\log n)$  to find the new breakpoint and
 to query $\mathcal B$. As opposed to~\cite{KaplanMozesNussbaumSharir}, we don't use a naive RMQ data structure but instead one of the existing linear-construction constant-query RMQ data structures such as~\cite{HT84}.

To prove Theorem~\ref{theorem:3.2improvement} we begin with two  changes to the above. First, we build $T_h$ on the rows of the $ (m/x)  \times n$ matrix $M'$ instead of the $m \times n$ matrix $M$ (again, when an entry  $M'_{ij}$ is desired, we retrieve it using the micro data structure in $O(x^\varepsilon)$ time). Second, for $\mathcal B$ we use the data structure of Lemma~\ref{lemma:3.1improvement} applied to the transpose of $M$. $\mathcal B$'s construction requires $O(n(\log  m+ \log n)/\log\log n)$ time and $O(n)$ space. After this, constructing $T_h$ (along with the $A_u$ arrays) on $M'$ requires $O(m/x\cdot \log(m/x))=O(m)$ space and $O((m/x) (\log(m/x) + \log  n)\cdot x^\varepsilon)=O(m(\log  m+ \log n)/\log\log m)$ time 
by choosing $x=\log m$ and  any $\varepsilon<1$.

Finally, we construct a data structure $T_v$  that is symmetric to $T_h$ but applied to the transpose of $M$. Notice that  $T_v$ is built on the columns of an $m \times (n/\log n)$ matrix $M''$ instead of the $m \times n$ matrix $M$.  
The construction of $T_v$, from a symmetric argument to the previous paragraph, also takes $O((m+n)(\log  n+
\log m)/\log \log m)$ time and $O(m+n)$ space. 

We now describe how to answer a submatrix query with row range $R$ and
column range $C$. Let $R'$ be the set of consecutive rows of $M'$ whose
corresponding rows in $M$ are entirely contained in $R$. 
Let $R_{p}$ be the prefix of $O(\log m)$ rows of $R$ that do not correspond to rows of $R'$. 
Let $R_{s}$ be the suffix of $O(\log m)$ rows of $R$ that do not correspond to
rows of $R'$.
We define the subranges $C',C_{p},C_{s}$ similarly (with respect to columns and to $M''$).
The submatrix query $(R,C)$ can be covered by the following:
(1) a submatrix query $(R',C)$ in $M'$, (2) a submatrix query $(R,C')$
in $M''$, and (3) four small $O(\log m)
\times O(\log n)$ submatrix queries in $M$ for the ranges $(R_i,C_j)$,
$i,j \in \{p,s\}$. We find the maximum
in each of these six ranges and return the maximum of the six
values. 

We find the maximum of each of the small  $O(\log m) \times
O(\log n)$ ranges of $M$ in $O(\log m +\log n )$ time using the SMAWK
algorithm. The maximum in the submatrix of $M'$ is found using $T_h$ as follows (the maximum in the submatrix of $M''$ is found similarly using $T_v$). Notice that 
$R'$ is the disjoint union of $O(\log m)$ canonical nodes of $T_h$. 
For each such canonical node $u$, we use binary-search on $u$'s list of breakpoints $\{j_1,j_2,\ldots, j_k\}$ to  find  the  set $\{j_a,j_{a+1},\ldots, j_b\}$
of $u$'s breakpoints that are fully contained in $C$. Although this
binary-search can take $O(\log m)$ time for each canonical node, using fractional cascading,
the searches on {\em all} canonical nodes take only $O(\log m)$
time and not  $O(\log^2 m)$.  
The maximum in all rows of $u$ and all columns between $j_a$ and $j_b$
is found by one query to the RMQ array $A^u$ in $O(1)$ time. Over
all canonical nodes this takes  $O(\log m)$ time. 

The  columns of $C$ that are to the left of $j_a$ all have their
maximum element in row $r(j_a)$ of $M'$ (that is, in one of $O(\log m)$ rows of $M$) . Similarly, the columns of $C$ that
are to the right of $j_b$ all have their maximum element in row
$r(j_{b+1})$ of $M'$. This means we have two rows of $M'$, $r(j_a)$ and $r(j_{b+1})$,
where we need to search for the maximum. We do this only after we have
handled all canonical nodes. That is, after we handle all canonical
nodes we have a set $A = a_1, a_2, \dots$ of $2 \log m$ rows of $M'$ in which we still need to find the
maximum. We apply the same procedure on $T_v$ which gives us a set $B
= b_1,b_2, \dots$
of $2 \log n$ columns of $M''$ in which we still have to find the
maximum.
Note that we only need to find the maximum among the elements of $M$ that
lie in rows corresponding to a row in $A$ and in columns
corresponding to a column in $B$.
This amounts to finding the maximum of the $O(\log m) \times
O(\log n)$  matrix $\bar M$, with $\bar M_{ij}$ being the maximum
among the elements of $M$ in the intersection of the $x$ rows corresponding to row $a_i$
of $M'$, and of the $x$ columns corresponding to column $b_j$ of $M''$.

An argument similar to the one in Lemma~\ref{lemma:3.1improvement}
shows that $\bar M$ is Monge. Therefore we can find its maximum
element using the SMAWK algorithm. We claim that each element of
$\bar M$ can be computed in $O(1)$ time, which implies that SMAWK
finds the maximum of $\bar M$ in $O(x)$ time. 

It remains to show how to compute an element of $\bar M$ in constant
time.
Recall from the proof of Lemma~\ref{lemma:3.1improvement} that $M$ is
partitioned into $x$-by-$n$ matrices $M^i$. During the preprocessing stage, for each $M^i$ we compute
and store its upper envelope, and an RMQ array over the maximum
elements in each interval of the envelope (similar to the array
$A^u$). Computing the upper envelope takes $O(x\log n)$ time by
incrementally adding one row at a time and using binary search to
locate the new breakpoint contributed by the newly added row. Finding
the maximum within each interval of the upper envelope can be done in
$O(x\log n)$ time using the tree $\mathcal B$. We store the upper
envelope in an atomic heap~\cite{FredmanW94}, which supports
predecessor searches in constant time provided $x$ is
$O(\log n)$. Overall the
preprocessing time is $O(m \log n)$, and the space is $O(m)$.
We repeat the same preprocessing on the transpose of $M$.

Now, given a row $a_i$ of $M'$ and column $b_j$ of $M''$, let
$[c_a,c_b]$ be the range of $x$ columns of $M$ that correspond to
$b_j$.
We search in
constant time for the successor $c_{a'}$ of $c_a$ and for the predecessor
$c_{b'}$ of $c_b$ in the upper envelope of $M^{a_i}$. We use
the RMQ array to find in $O(1)$ time the maximum element $y$ among elements in all rows
of $M$ corresponding to $a_i$ and columns in the range
$[c_{a'},c_{b'})$. The maximum element in columns $[c_a,c_{a'})$ and $[c_{b'},c_b]$  is contributed
by  two known rows $r_1,r_2$. We repeat the symmetric process for
the transpose of $M$, obtaining a maximum element $y'$, and two
columns $c_1,c_2$. $\bar M_{a_i,b_j}$ is the maximum among six values: $y,y'$ and
the four elements $M_{r_1c_1}, M_{r_1c_2}, M_{r_2c_1}, M_{r_2c_2}$.
\qed \end{proof}
%

 Notice that in the above proof, in order to obtain an element of
 $\bar M$ in constant time, we loose the $O(\log\log m)$ speedup in
 the construction time. This is because we found the upper envelope of
 each $M^i$.  To get the $O(\log\log m)$ speedup we can 
obtain an element of $\bar M$ in $O(x^\varepsilon)$ time using the
micro data structure.

\begin{corollary}\label{cor:3.2improvement}
Given a $m\times n$ Monge matrix,  one can construct, in $O((m+n)(\log  n+
\log m)/\log\log m)$ time, a data structure of size $O(m+n)$ that reports the maximum entry in a query
submatrix in $O((\log m+\log n)^{1+\varepsilon})$ time for any fixed $0<\varepsilon<1$. 
\end{corollary}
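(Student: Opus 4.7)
The plan is to prove the corollary by following the proof of Theorem~\ref{theorem:3.2improvement} step by step, substituting the micro data structure of Lemma~\ref{lemma:micro} for the full upper-envelope computation that was the construction-time bottleneck. Concretely, I would build a micro data structure with parameter $r = x^{\varepsilon/2}$ on each $M^i$; by Lemma~\ref{lemma:micro} this takes $O(x \log n / \log r) = O(x \log n / \log\log m)$ time and $O(x)$ space per submatrix, totaling $O(m \log n / \log\log m)$, which restores the $\log\log m$ speedup lost in Theorem~\ref{theorem:3.2improvement}. Symmetrically, I would build micro data structures on the column-block matrices $M^{*j}$ (the transpose analog used in Theorem~\ref{theorem:3.2improvement}). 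The construction of $T_h$, $T_v$, $\mathcal{B}$, and the RMQ arrays over their canonical nodes is unchanged and remains within the target $O((m+n)(\log m + \log n)/\log\log m)$ time and $O(m+n)$ space.

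For the query, the decomposition of Theorem~\ref{theorem:3.2improvement} is retained: pick canonical nodes in $T_h$ and $T_v$, solve the four small $O(\log m)\times O(\log n)$ corner problems by direct SMAWK, and combine their contribution with SMAWK on the derived $O(\log m) \times O(\log n)$ Monge matrix $\bar M$. The only new ingredient is how to obtain each entry $\bar M_{a_i, b_j}$---the maximum of an $x\times x$ sub-Monge matrix of $M$---in $O(x^\varepsilon)$ time instead of $O(1)$. I would use both micro data structures simultaneously: predecessor searches in the column-side micro data structure of $M^{a_i}$ and in the row-side micro data structure of $M^{*b_j}$ locate the specials falling inside $[c_a, c_b]$ and $[r_a, r_b]$; the stored RMQ arrays yield in $O(1)$ the maximum contributed by those specials in either dimension; and the Monge property restricts any remaining contribution (coming from non-special rows \emph{and} non-special columns) to the intersection of two $O(r)$-wide bands, giving $O(r^2) = O(x^\varepsilon)$ candidate cells to examine directly.

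Given the $O(x^\varepsilon) = O((\log m + \log n)^\varepsilon)$ per-entry cost, SMAWK on $\bar M$---which performs $O(\log m + \log n)$ accesses---runs in $O((\log m + \log n)^{1+\varepsilon})$ time, dominating all other query steps and matching the claimed bound.

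The main obstacle I expect is verifying the ``double-band'' argument in the middle--middle case, where the true maximum lies between two consecutive specials in \emph{both} the row and the column directions; in Theorem~\ref{theorem:3.2improvement} this case was trivial because the fully precomputed envelopes made each band a single row, so effectively $r=1$. To handle middle--middle without breaking the construction budget, I would augment each micro data structure with a precomputed $O(x)$-size table of per-interval maxima together with an RMQ over it, so that every contribution (boundary or middle) is accounted for within the $O(r^2)$ candidate checks. The per-interval maxima themselves can be obtained by folding small SMAWK calls on the skewed $(r{+}1)\times \ell$ sub-Monge matrices into the recursion of Lemma~\ref{lemma:micro}, amortizing their cost inside the $O((m+n)(\log m + \log n)/\log\log m)$ construction budget while preserving $O(m+n)$ space.
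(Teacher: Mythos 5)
Your route is the paper's route: the paper's entire ``proof'' of Corollary~\ref{cor:3.2improvement} is the remark preceding it, namely that the only obstacle to keeping the $\log\log m$ construction speedup in Theorem~\ref{theorem:3.2improvement} is the $O(x\log n)$-per-block computation of the exact envelopes of the $M^i$, and that one should instead run the micro structure of Lemma~\ref{lemma:micro} with $r=x^{\Theta(\varepsilon)}$ and pay $O(x^\varepsilon)$ per access to $\bar M$, so that SMAWK on the $O(\log m)\times O(\log n)$ matrix $\bar M$ costs $O((\log m+\log n)^{1+\varepsilon})$. Your query-side ``double-band'' argument is the correct generalization of the paper's six-value argument (which is exactly the case $r=1$, where each band is a single row and the band intersection is the four corner elements), and you rightly identify the middle--middle case as the place where a precomputed per-interval RMQ is still needed.

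The genuine gap is in your last step, the construction of the per-interval maxima. The value attached to the interval between consecutive special columns $j_k,j_{k+1}$ is the maximum over a $w_k\times\ell_k$ Monge sub-block with $w_k\le r+1$ and $\ell_k=j_{k+1}-j_k$. Your proposal to obtain these by SMAWK calls on these skewed sub-blocks costs $\sum_k O(w_k+\ell_k)$ per row-block $M^i$, and $\sum_k \ell_k = n$ because the intervals tile all $n$ columns; this is $\Theta(n)$ per $M^i$, i.e.\ $\Theta(mn/\log m)$ overall, which is nowhere near the $O((m+n)(\log m+\log n)/\log\log m)$ budget. The natural alternative --- one query to $\mathcal B$ per row of each band, exploiting $\sum_k w_k=O(x)$ by telescoping --- costs $O(x\log n)$ per $M^i$, i.e.\ $O(m\log n)$ in total, which is precisely the term of Theorem~\ref{theorem:3.2improvement} that the corollary claims to shave by a $\log\log m$ factor. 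So as written your amortization does not close; you need either a genuinely cheaper way to aggregate column maxima per interval, or a query-time mechanism that avoids the per-interval RMQ altogether (enumeration of the up-to-$x$ fully contained intervals at query time is also too slow). To be fair, the paper is equally silent on this point --- its one-sentence remark does not say how the RMQ leaves are produced within the reduced construction time --- so you have reproduced the paper's argument at its own level of detail and then proposed a repair whose cost analysis does not hold up.
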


\paragraph{\bf A linear-space subcolumn data structure for partial matrices.} 

We next claim that the bounds of Lemma~\ref{lemma:3.1improvement} for
TM matrices also  apply to {\em partial} TM
 matrices. The reason is that we can efficiently turn any
partial TM matrix $M$ into a full TM
matrix by implicitly filling appropriate constants instead of the blank
entries.

\begin{lemma}\label{lemma:filltheblanks}
The blank entries in an $m\times n$ partial Monge matrix $M$ can be implicitly replaced in $O(m+n)$ time so that $M$ becomes Monge and each $M_{ij}$ can be returned in $O(1)$ time.
\end{lemma}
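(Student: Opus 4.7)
The plan is to implicitly fill each blank of $M$ with a closed-form value depending only on $(i,j)$ and a small precomputed table of row boundaries, chosen to preserve total monotonicity while allowing $O(1)$-time queries. As preprocessing I would compute, in a single scan, the interval $[a_i,b_i]$ of defined columns for every row $i$, storing it in an $O(m)$ table, together with a constant $W$ strictly exceeding $\max_{(i,j)\text{ defined}}|M_{ij}|$ by a safe factor (at least $m+n$). A short structural argument using the two contiguity assumptions (defined entries form a contiguous interval in each row \emph{and} in each column) shows that $(b_i)$ is ``mountain-unimodal'' (non-decreasing then non-increasing) and $(a_i)$ is ``valley-unimodal,'' so the blank region decomposes into at most four staircase triangles.

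The fill rule is: for a right blank $j>b_i$, set $M_{ij}:=-Wj+i$; for a left blank $j<a_i$, set $M_{ij}:=-W(2n-j)+i$. Under this rule every blank is smaller than every defined entry, every left blank is smaller than every right blank in its column, and within any single blank region the value at any fixed column is strictly increasing in the row index $i$.

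Verifying TM then proceeds by case analysis on the quadruple $(i,k),(j,k),(i,\ell),(j,\ell)$ with $i<j$ and $k<\ell$. The all-defined case is the hypothesis on $M$. The single-blank subcases are immediate because the dominant $-W\cdot$ term makes the relevant comparison trivially true or the hypothesis trivially false (leaving the implication vacuous), and the $\ell$-side behaves analogously. The same-region two-blank subcases reduce to the order $i<j$ via the $+i$ perturbation. The main obstacle is the cross-region subcases, e.g.\ $(i,k)$ a left blank while $(j,k)$ is a right blank. I would handle these by exploiting the unimodality of $(a_i),(b_i)$ together with the column-contiguity of defined entries, which restrict when such cross-region quadruples can occur and reduce each remaining configuration to a within-region comparison already handled.

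Finally, each query $M_{ij}$ costs $O(1)$: look up $a_i,b_i$; if $j\in[a_i,b_i]$ return the input value (itself retrievable in $O(1)$ by the paper's access assumption); otherwise evaluate the appropriate one of the two closed-form formulas.
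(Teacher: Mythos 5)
Your fill rule is not totally monotone, and the flaw sits exactly in the case you defer to the structural restrictions --- they do not restrict enough. Take $m=2$, $n=5$, with defined intervals $[a_1,b_1]=[3,5]$ and $[a_2,b_2]=[1,3]$; this satisfies row- and column-contiguity and the overlap assumption, and the defined part is vacuously TM. Let $i=1$, $j=2$, $k=1$, $\ell=4$. Then $(1,1)$ is a left blank, $(2,1)$ is defined, $(1,4)$ is defined, and $(2,4)$ is a right blank. Under your rule $M_{1,1}=-W(2n-1)+1$ lies far below the defined entry $M_{2,1}$, so the hypothesis $M_{1,1}\le M_{2,1}$ holds; but $M_{2,4}=-4W+2$ lies far below the defined entry $M_{1,4}$, so the conclusion $M_{1,4}\le M_{2,4}$ fails. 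Your assertion that a single blank makes ``the comparison trivially true or the hypothesis trivially false'' is backwards on one side: a very negative value at $(i,k)$ makes the hypothesis trivially \emph{true}, which obligates the conclusion, and a very negative value at $(j,\ell)$ kills the conclusion whenever $(i,\ell)$ is defined. Indeed, in this configuration TM forces $M_{1,1}>M_{2,1}$ or $M_{2,4}\ge M_{1,4}$, i.e.\ at least one blank must dominate a defined entry; so the property you highlight as a feature (``every blank is smaller than every defined entry'') is irreconcilable with total monotonicity, and no tuning of the $i$- and $j$-dependence can repair a uniformly-very-negative fill.

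The repair is an asymmetric, four-region assignment, which is what the paper does. Split the blanks not just into left versus right, but also by whether they lie on the decreasing or increasing branch of the unimodal boundary sequences $(s_i)$ and $(t_i)$: \emph{upper-left} blanks (rows in the non-increasing prefix of $s$) receive values \emph{above} every defined entry, e.g.\ $m+W-i$, so that any hypothesis with such a blank at position $(i,k)$ is false and the implication is vacuous; \emph{lower-right} blanks (rows in the non-increasing suffix of $t$) likewise receive $+W$, so that any conclusion with such a blank at position $(j,\ell)$ holds automatically; only \emph{lower-left} and \emph{upper-right} blanks receive values below $-W$ (with a $-i$ perturbation on the lower-left ones). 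In the example above, $(1,1)$ is an upper-left blank and $(2,4)$ a lower-right blank, and the large values falsify the hypothesis and validate the conclusion, respectively. Your preprocessing and $O(1)$-query mechanics (storing $a_i,b_i$ and evaluating a closed form) are fine and match the paper; only the fill values need to change.
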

\begin{proof}
Let $s_i$ (resp. $t_i$) denote the index of the leftmost (resp. rightmost) column that is defined in row $i$. 
Since the defined (non-blank) entries of each row and column are continuous we have that the sequence $ s_1,s_2,\ldots,s_m$ starts with a non-increasing prefix $s_1\ge s_2\ge \ldots \ge s_a$ and ends with a non-decreasing suffix $s_a\le s_{a+1}\le \ldots \le s_m$. 
Similarly, the sequence $t_1,t_2,\ldots,t_n$ starts with a non-decreasing prefix $ t_1\le t_2\le \ldots \le t_b$ and ends with a non-increasing suffix $t_b\ge t_{b+1}\ge \ldots \ge t_m$. 

We partition the blank region of $M$ into four regions: (I) entries that are above and to the left of $M[i, s_i]$ for $i=1,\ldots,a$, 
(II) entries that are below and to the left of $M[i, s_i]$ for $i=a+1,\ldots,m$, 
(III) entries that are above and to the right of $M[i ,t_i]$ for $i=1,\ldots,b$, 
(IV) entries that are below and to the right of $M[i ,t_i]$ for $i=b+1,\ldots,n$.
We first describe how to replace all entries in region I to make them non-blank and
obtain a valid partial Monge matrix (whose blank entries are only in regions II, III, and IV). The remaining regions are handled in a similar manner, one after the other.

We describe our method for filling in the blank entries in region I in two steps. In the first step we show how to implicitly fill in the blanks in a lower right triangular Monge matrix so that each filled blank entry can be computed in $O(1)$ time. By a lower right triangular Monge matrix we mean a partial Monge square matrix with $n$ rows and columns, such that, for all $1\leq i\leq n$, $s(i) = n-i+1$. In the second step we explain that any $m$-by-$n$ partial Monge matrix whose blank entries are in region I can be turned into a lower right triangular Monge matrix with at most $m+n$ rows and columns. The only operations used in the transformation are duplicating rows, duplicating columns, and turning elements into blanks. We will show an $O(m+n)$ procedure for computing two tables. One specifying, for each row index $1\leq i\leq m$, the corresponding row index in the larger $O(m+n)$ triangular matrix. The second is an analogous table for the columns indices. The lemma then follows for the blank entries in region I. The other regions are treated by reducing to the region I case, one after the other.

We now describe how to fill in the blank regions in a lower right triangular Monge matrix. Let $W$ denote the largest absolute value of any non-blank entry in $M$ (We can find $W$ by applying the
algorithm of Klawe and Kleitman~\cite{KK89}). 
Intuitively, we would like to make every $M[i,j]$ in the upper left triangle very large. However,
we cannot simply assign the same large value to all of them, because then the Monge
inequality would not be guaranteed to hold if more than one of the four considered elements
belongs to the replaced part of the matrix. 
A closer look at all possible cases shows that setting all the entries of each diagonal to the same value does work. 
More precisely, we replace the blank element $M[i,j]$ with $3W[2(n-i-j)+1]$. Thus, each element in the first diagonal off the main diagonal ($i+j=n$) is set to $3W$, the elements of the second diagonal off the main diagonal are set to $9W$, etc. Note that the maximum element in the resulting matrix is $O(nW)$. 
To prove that the resulting new matrix $M'$ is Monge, it suffices, by Proposition~\ref{prop:adjacent condition}, to show that, for all $1\leq i,k < n$,  $M'[i,k]+M'[i+1,k+1]-M'[i,k+1]-M'[i+1,k] \geq 0$.
To this end we consider the following cases:
\begin{enumerate}
\item $i+k>n$, so all $M[i,k],M[i+1,k+1],M[i,k+1],M[i+1,k]$ are non-blank, and the inequality holds because $M$
is partial Monge.
\item $i+k=n$, so $M[i,k]$ is blank and $M[i+1,k+1],M[i,k+1],M[i+1,k]$ are non-blank. Then\\
$M'[i,k]+M'[i+1,k+1]-M'[i,k+1]-M'[i+1,k] = 3W + M'[i+1,k+1]-M'[i,k+1]-M'[i+1,k] \geq 3W-3W = 0$.
\item $i+k=n-1$, so $M[i,k],M[i,k+1],M[i+1,k]$ are blank, and $M[i+1,k+1]$ is non blank. Then\\
$M'[i,k]+M'[i+1,k+1]-M'[i,k+1]-M'[i+1,k] = 9W + M[i+1,k+1] - 3W-3W \geq 3W - W \geq 0$.
\item $i+k<n-1$, so all $M[i,k],M[i+1,k+1],M[i,k+1],M[i+1,k]$ are blank. Then,\\ 
$M'[i,k]+M'[i+1,k+1]-M'[i,k+1]-M'[i+1,k] = 0$.
\end{enumerate}
Hence the new matrix $M'$ is indeed Monge.

Next, we describe how to turn any $m$-by-$n$ partial Monge matrix $M$ whose blank entries are in region I into a slightly larger lower right triangular matrix $M'$. This is done by duplicating rows or columns of $M$ and replacing by blanks a nonempty prefix in all but a single copy. Thus, each row $r$ (column $c$) of $M$ has exactly one appearance in $M'$ in which no elements are replaced by blanks. We say that $r$ ($c$) is mapped to this appearance in $M'$.   Propositions~\ref{prop:undef} and~\ref{prop:duplicate} guarantee that $M'$ is partial Monge. 
For ease of presentation we describe the process as if we actually transform the $M$ into $M'$. 

The assumption that the blank entries are in region I implies that $s_1 \geq s_2 \geq \cdots \geq s_m$ and that $t_1=t_2=\cdots = t_m$. 
We first guarantee that the $s_i$'s are strictly decreasing. We do this by iterating through the $s_i$'s. If $s_i = s_{i-1}$, we duplicate the column $s[i]$ of $M$, make $M[i-1,s[i]]$ blank, and mark the column currently at index $s[i]$ as a duplicate (the index of this column might change later if columns with smaller indices will be duplicated). This column duplication has the effect of increasing by 1 all $s[j]$'s for $j<i$. At the end of this process we construct a table $c[\cdot]$ which keeps track of the mapping of columns of $M$ to $M'$ by recording for each non-duplicate row its original index in $M$ and its index after this process. Clearly, computing $c[\cdot]$ and updating the $s_i$'s can be done in $O(m+n)$ time without actually duplicating the columns.

We may now assume that the $s_i$'s are strictly decreasing, and we use $n$ to denote the number of columns of $M$ after the transformation that ensured the strict monotonicity. 
We use a table $r[\cdot]$ to keep track of the mapping from rows of $M$ to rows of $M'$. 
For convenience, we define $s_0 = n+1$, and $r[0] =0$. We iterate through the sequence $s_1,s_2, \dots, s_m$. 
We add to $M'$ $s_{i-1}-s_i$ copies of row $i$ of $M$, and, for $j=1,2, \dots, s_{i-1}-s_i-1$, replace the prefix of length $j$ from the $j$'th copy by blanks, so only the last copy remains unchanged. We therefore set $r[i]$ to $r[i-1]+s_{i-1}-s_i$. Clearly, we can compute the table $r[\cdot]$ in $O(m)$ time without actually constructing $M'$.

Finally, to obtain the value with which the blank entry at $M[i,j]$ should be replaced when converting $M$ into a full Monge matrix, we return $3W[2(n-r[i]-c[j])+1]$.

Regions II, III, and IV can be handled symmetrically to region I. To handle undefined entries in region II, we 
implicitly reverse the order of the rows and negate all the elements of the matrix. It is easy to verify that the resulting matrix is Monge with undefined entries in region I. We then implicitly fill in the undefined values using the method described above, negate all the elements and revert the order of rows to its original order. The transformation for region III is reversing the order of columns and negating all elements, and the transformation for region IV is reversing the order of both rows and columns.
Note that to make $M$
full Monge we first need to fill the blanks in region I, then calculate the new value of $W$
and fill the blanks in region II accordingly, and so on.
\qed \end{proof}

The above lemma means we can (implicitly) fill the black entries in $M$ so that $M$ is a {\em full} TM
matrix. We can therefore apply the data structure of Lemma~\ref{lemma:3.1improvement}. 
Note that the maximum element in a query (a column $\pi$ and a range of rows $R$) might now appear in one of the previously-blank entries. 
This is easily overcome by first restricting $R$ to the defined entries in the column  $\pi$ and only then querying the data structure of Lemma~\ref{lemma:3.1improvement}.

\paragraph{\bf A linear-space submatrix data structure for partial matrices.} 
Given a partial matrix $M$, the above simple trick of replacing
appropriate constants instead of the blank entries does not work for
submatrix queries because the defined (i.e., non-blank) entries in a
submatrix do not necessarily form a submatrix. Instead, we need a more
complicated construction, which yields the following theorem.

\begin{theorem}\label{theorem:3.4improvement}
Given a $m\times n$ partial Monge matrix,  one can construct, in $O((m + n) 
\log (m+n))$ time, a data structure of size $O(m  + n )$ that reports the maximum entry in a query
submatrix in  $O((\log m+\log n) \alpha(m+n))$~time.
\end{theorem}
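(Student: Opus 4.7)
The plan is to mirror the architecture of Theorem~\ref{theorem:3.2improvement} --- a row tree $T_h$ built over a condensed version of $M$, a symmetric column tree $T_v$, an $O(1)$-RMQ array $A^u$ at every node, and a final SMAWK-style computation on a residual $O(\log m)\times O(\log n)$ matrix $\bar M$ --- but to make each ingredient work in the partial setting. Two new tools do most of the work. First, the $O(k)$ upper-envelope bound for $k$-row partial Monge matrices proved in Section~\ref{sec:upperbound} keeps every stored breakpoint list linear in size, which is exactly what removes the $\alpha(m)$ factor from the space bound of~\cite{KaplanMozesNussbaumSharir}. Second, wherever the full-matrix construction invokes SMAWK we substitute the Klawe--Kleitman algorithm~\cite{KlaweK90}, whose $O(z\,\alpha(y)+y)$ cost on a partial $y\times z$ TM matrix is the unique source of the final $\alpha(m+n)$ factor in the query time.

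Concretely, I first build a partial analogue of the micro data structure of Lemma~\ref{lemma:micro}: the recursive SMAWK calls on the $x_i\times x_i$ subproblems are replaced by Klawe--Kleitman, the bookkeeping on special columns is unchanged, and the linear envelope bound still forces the final list to have $O(x)$ size. Next I build $T_h$ on the condensed matrix $M'$ (rows grouped into blocks of $x=\log m$ and condensed via the partial micro structure) and $T_v$ symmetrically on $M''$. At every internal node of $T_h$ the two children's envelopes are merged into a single linear-size partial-Monge envelope; the multiple crossings typical of the partial setting are handled by a careful sweep over the children's breakpoints, and each new breakpoint produced by the sweep is labelled with its maximum value via a single $O(\log n)$-time query to $T_v$. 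The RMQ array $A^u$ is then filled exactly as in Theorem~\ref{theorem:3.2improvement}. Summing the per-node cost over all of $T_h$ and $T_v$ gives the claimed $O((m+n)\log(m+n))$ construction time and $O(m+n)$ space.

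The query then follows the template of Theorem~\ref{theorem:3.2improvement} almost verbatim. The row range $R$ is covered by $O(\log m)$ canonical nodes of $T_h$; fractional cascading~\cite{CGfrac} locates, in $O(\log m)$ total time, the breakpoints of every canonical node strictly inside the column range $C$, after which $O(\log m)$ RMQ lookups in the arrays $A^u$ handle all interior intervals in $O(1)$ each. The two boundary pieces of each canonical node contribute, in aggregate, a set $A$ of $O(\log m)$ rows of $M'$; the symmetric pass on $T_v$ contributes a set $B$ of $O(\log n)$ columns of $M''$. The residual problem is to compute the maximum of the partial Monge matrix $\bar M$ of $M$ induced by $A\times B$, whose entries can be obtained in $O(1)$ (or $O(\log^{\varepsilon} m)$) time from the per-block partial upper envelopes stored at preprocessing, exactly as in Theorem~\ref{theorem:3.2improvement}. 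Since $\bar M$ is only partial TM we solve it with Klawe--Kleitman rather than SMAWK, at a cost of $O((\log m+\log n)\,\alpha(m+n))$, which dominates the query.

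The main obstacle I expect is the partial micro data structure. SMAWK's row-pruning step relies crucially on total monotonicity of a fully defined rectangular submatrix, and it is not immediate that the recursive submatrices $M_i$ carved out by the bracketing of special columns remain partial TM, nor that their blank regions line up consistently with the inductive invariants of Lemma~\ref{lemma:micro}; verifying this, and choosing the bracketing rule so that Klawe--Kleitman can be applied off the shelf at every recursive call, is the technical heart of the proof. A secondary subtlety is the linear-time merge of two partial envelopes at an internal node of $T_h$: the Section~\ref{sec:upperbound} bound guarantees that the merged envelope has linear size, but one must still design a merging sweep whose running time is proportional to that output size rather than to the potentially larger number of child-envelope intersections encountered during the sweep.
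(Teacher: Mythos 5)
Your high-level architecture (condensed matrix $M'$, trees $T_h$ and $T_v$, RMQ arrays, Klawe--Kleitman in place of SMAWK for the residual partial pieces, and the linear envelope bound of Section~\ref{sec:upperbound} to kill the $\alpha$ factor in the space) matches the paper. But there is a genuine missing component: you never address what happens to a column $j$ of a row-block $M^i$ that is only \emph{partially} defined. If you define $M'_{ij}$ as the maximum over the defined entries of such a column, the condensed matrix $M'$ is no longer partial Monge --- the TM-transfer argument of Lemma~\ref{lemma:3.1improvement} breaks precisely because the row achieving the maximum in column $k$ of block $i$ need not be defined in column $\ell$. The paper's fix is to define $M'_{ij}$ only when column $j$ of $M^i$ is \emph{entirely} defined (so $M'$ is genuinely partial Monge), and to collect the maxima of the partially-defined block-columns into a separate sparse matrix $S'$ with at most two entries per column. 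The defined entries of $S'$ split into four monotone staircase sequences, each answerable with a one-dimensional RMQ in $O(\log\log n)$ time, and the query is covered by six pieces ($M'$, $S'$, $M''$, $S''$, plus the four $O(\log m)\times O(\log n)$ corners) rather than your five. Without $S'$ and $S''$ your query can simply miss the true maximum.

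Your ``partial micro data structure'' is also an unnecessary detour that, as you yourself suspect, does not go through as stated: the recursion of Lemma~\ref{lemma:micro} relies on the column maxima rows $r(j)$ being monotone across the whole matrix, which fails for partial TM matrices (their envelopes can oscillate), so bracketing by special columns no longer confines the maximum of an intermediate column to a small row range, and swapping SMAWK for Klawe--Kleitman does not repair this. The paper sidesteps the issue entirely: it first \emph{implicitly fills the blanks} of each block $M^i$ with the constants of Lemma~\ref{lemma:filltheblanks}, turning each block into a full TM matrix, and then applies the ordinary micro data structure (with $r=1$) unchanged. The filled-in values are harmless exactly because fully-defined block-columns go to $M'$ and partially-defined ones are rerouted to $S'$. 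Your second worry --- that the envelope merge at an internal node of $T_h$ might cost more than the output size --- is handled in the paper by charging $O(\log n)$ per new breakpoint (binary search) and invoking Theorem~\ref{lemma:partial-breakpoints} to bound the number of new breakpoints at a $k$-leaf node by $O(k)$, which is consistent with what you sketch.
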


\begin{proof}
As before, we partition $M$ into $m/x$ matrices $M^1, M^2, \ldots,
M^{m/x}$, where $x=\log m$ and $M^i$ is an $x\times n$ matrix composed
of $x$ consecutive rows of $M$. We wish to again define the $(m/x)
\times n$ matrix $M'$ such that $M'_{ij}$ is equal to the maximum
entry in column $j$ of $M^i$. However,  it is now possible that some
(or all) of the entries in column $j$ of $M^i$ are undefined. We
therefore define $M'$ so that $M'_{ij}$ is equal to the maximum entry
in column $j$ of $M^i$ only if the entire  column $j$ of $M^i$ is
defined. Otherwise,  $M'_{ij}$ is undefined. We also define the sparse matrix $S'$ so that
$S'_{ij}$ is undefined if column $j$ of $M^i$ is either entirely
defined or entirely undefined. Otherwise, $S'_{ij}$  is equal to the
maximum entry among all the defined entries in column $j$ of $M^i$. 

Using a similar argument as before, it is easy to show that $M'$ is also a partial Monge matrix. 
The matrix $S'$, however, is not partial Monge, but it is a sparse
matrix with at most two entries per column. It has additional
structure on which we elaborate in the sequel.

We begin with $M'$. As before, we cannot afford to store $M'$
explicitly. Instead, we use the micro data structure on $M^1,  \ldots,
M^{m/x}$ (after implicitly filling the blanks in $M$ using
Lemma~\ref{lemma:filltheblanks}). This time we use $r=1$ and so the
entire construction takes   
$O((m/x) x \log n/\log r)=O(m\log n)$ time and $O(m)$ space, after
which we can retrieve any entry of $M'$  in $O(pred(x,n))$ time.  
We then build a similar data structure to the one we used in
Theorem~\ref{theorem:3.2improvement}.  That is, we build $T_h$ on
$M'$, and for $\mathcal B$ we use the data structure of
Lemma~\ref{lemma:3.1improvement} applied to the transpose of $M$
(after implicitly filling the blanks).  
$\mathcal B$'s construction therefore requires $O(n(\log  m+ \log
n)/\log\log n)$ time and $O(n)$ space.  

After constructing $\mathcal B$, constructing $T_h$ (along with the  RMQ arrays $A^u$) on $M'$ is done bottom up. This time, since $M'$ is partial Monge, each node of $T_h$ can contribute more than one new breakpoint. However, as we show in Section~\ref{sec:upperbound} (Theorem~\ref{lemma:partial-breakpoints}),
a node whose subtree contains $k$ leaves (rows) can contribute at most $O(k)$ new breakpoints.
Each new breakpoint can be found in $O(\log n)$ time via binary search. Summing $O(k\cdot  \log n \cdot pred(x,n))$ over all 
$m/x$ nodes
of $T_h$ gives $O((m/x) \log(m/x) \cdot \log n \cdot pred(x,n))) =
O(m\log n)$  time  and $O(m/x\cdot \log(m/x))=O(m)$
space. Notice we use atomic heaps here to get $pred(x,n)=O(1)$.

Similarly to what was done in Theorem~\ref{theorem:3.2improvement},
we repeat the entire preprocessing with the transpose of $M$ (that is,
we construct $T_v$ on the columns of the $m \times (n/\log n)$ matrix
$M''$, along with the RMQ data structures, and also construct the
corresponding sparse matrix $S''$). This takes $O(n\log m)$
time  and $O(n)$ space. 

We now describe how to answer a submatrix query with row range $R$ and
column range $C$. Let $R', R_s, R_p, C', C_s, C_p$ be as in
Theorem~\ref{theorem:3.2improvement}. 
The submatrix query $(R,C)$ can be covered by the following:
(1) a submatrix query $(R',C)$ in $M'$, 
(2) a submatrix query $(R',C)$ in $S'$, 
(3) a submatrix query $(R,C')$ in $M''$, 
(4) a submatrix query $(R,C')$ in $S''$, 
and (5) four small $O(\log m)
\times O(\log n)$ submatrix queries in $M$ for the ranges $(R_i,C_j)$,
$i,j \in \{p,s\}$.
We return the overall maximum among the maxima
in each of these queries.

We already described how to handle the queries in items (1), (3), and
(5)  in the proof of Theorem~\ref{theorem:3.2improvement}. The only
subtle difference is that in Theorem~\ref{theorem:3.2improvement} we
used the SMAWK algorithm on $O(\log m) \times O(\log n)$ Monge
matrices while here we have partial Monge matrices. We therefore use
the Klawe-Kleitman algorithm~\cite{KlaweK90} instead of SMAWK which
means the query time is $O((\log m+\log n) \alpha(n))$ and not
$O(\log m+\log n)$.  

We next consider the query to $S'$. The query to $S''$ is handled in a
similar manner. 
Recall from the proof of Lemma~\ref{lemma:filltheblanks} the structure
of a partial matrix $M$. Let $s_i$ (resp. $t_i$) denote the index of the leftmost (resp. rightmost) column that is defined in row $i$. 
Since the defined (non-blank) entries of each row and column are continuous we have that the sequence $ s_1,s_2,\ldots,s_m$ starts with a non-increasing prefix $s_1\ge s_2\ge \ldots \ge s_a$ and ends with a non-decreasing suffix $s_a\le s_{a+1}\le \ldots \le s_m$. 
Similarly, the sequence $t_1,t_2,\ldots,t_n$ starts with a
non-decreasing prefix $ t_1\le t_2\le \ldots \le t_b$ and ends with a
non-increasing suffix $t_b\ge t_{b+1}\ge \ldots \ge t_m$. 
See Fig.~\ref{fig:partial} for an illustration.
It follows that the defined entries of $S'$ can be partitioned into
four sequences, such that the row and column indices in each sequence
are monotone. We focus on
one of these monotone sequences in which the set of defined entries is
in coordinates $(r_1,c_1),(r_2,c_2),\ldots$ such that
$r_{i+1} \ge r_i$ and $c_{i+1}\le c_i$. The other monotone sequences
are handled similarly. 
Notice that any query range
that includes $(r_i,c_j)$ and $(r_j,c_j)$ for some $i<j$ must include
entries $(r_k,c_k)$ for  all $i<k<j$. Given a range query $(R,C)$, we
find in $pred(n,n)$ time the interval $[i_1,i_2]$ of indices  that are inside $R$.
Similarly,
we find the interval $[i'_1,i'_2]$ of indices  that are inside $C$.
We can then  use a (1-dimensional) RMQ data
structure on the $O(n)$ entries in this sequence 
to find the maximum element in the intersection of these two
ranges in $O(1)$
time. Overall, handling the query in $S'$ takes $pred(n,n) = O(\log\log
n)$ time.

To conclude the proof of Theorem~\ref{theorem:3.4improvement}, notice that our data structure requires $O(m+n)$ space, is constructed in $O(m\log n  + n\log n / \log\log n+n\log m + n\log n)$ time which is $O(n\log n)$, and has $O((\log m+\log n) \alpha(n))$  query time.
\qed \end{proof}
%

\noindent Finally,  for the same reasons leading to Corollary~\ref{cor:3.2improvement} we can get a $\log\log m$ speedup in the construction-time with a $\log^\varepsilon n$ slowdown in the query-time.

\begin{corollary}\label{cor:3.4improvement}
Given a $m\times n$ partial Monge matrix,  one can construct, in $O((m + n) 
\log (m+n)/\log\log m)$ time, a data structure of size $O(m  + n )$ that reports the maximum entry in a query
submatrix in  $O((\log m+\log n)^{1+\varepsilon} \alpha(m+n))$ time for any fixed $0<\varepsilon<1$.
\end{corollary}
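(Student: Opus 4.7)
My plan is to follow the proof of Theorem~\ref{theorem:3.4improvement} verbatim, with a single modification: wherever the micro data structure of Lemma~\ref{lemma:micro} is built for the blocks $M^1,\ldots,M^{m/x}$ with $x=\log m$, I will instantiate it with $r=x^\varepsilon$ instead of $r=1$. This is exactly the trick that derives Corollary~\ref{cor:3.2improvement} from Theorem~\ref{theorem:3.2improvement}, and I expect the effect here to be identical. By Lemma~\ref{lemma:micro}, the micro-structure preprocessing drops from $O(x\log n)$ to $O(x\log n/\log r)=O(x\log n/\log\log m)$ per block, giving $O(m\log n/\log\log m)$ in total, at the cost of raising the time to look up an entry $M'_{ij}$ from $O(1)$ to $O(r+\mathrm{pred}(x,n))=O(\log^\varepsilon m)$.

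With the faster micro structure in place, I would reuse the rest of Theorem~\ref{theorem:3.4improvement} unchanged: the tree $T_h$ is built on $M'$ using the partial-Monge breakpoint bound of Theorem~\ref{lemma:partial-breakpoints}, the auxiliary structure $\mathcal{B}$ is constructed via Lemma~\ref{lemma:3.1improvement}, the sparse structures $S'$ and $S''$ (with their one-dimensional RMQs) are built as before, and the symmetric construction of $T_v$ on $M''$ is handled identically. Summing contributions, the entire preprocessing fits inside $O((m+n)\log(m+n)/\log\log m)$ time within $O(m+n)$ space. At query time I would run the query algorithm of Theorem~\ref{theorem:3.4improvement} without change. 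The $O((\log m + \log n)\alpha(m+n))$ cost of that theorem is dominated by the Klawe--Kleitman pass on the residual $O(\log m)\times O(\log n)$ partial Monge submatrix arising from the canonical decompositions of $T_h$, $T_v$ and from the $\bar M$ glue matrix; every entry of this small matrix is now obtained through a micro query that costs $O(\log^\varepsilon m)$ rather than $O(1)$, inflating the query to $O((\log m+\log n)^{1+\varepsilon}\alpha(m+n))$ while leaving the remaining work (fractional cascading through $T_h$ and $T_v$, the $A^u$-based RMQ lookups, and predecessor searches into $S'$ and $S''$) unaffected.

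The subtle point I expect to grind through, exactly as in Corollary~\ref{cor:3.2improvement}, is checking that the slower per-entry access to $M'$ does not blow up the construction of $T_h$ past the $O((m+n)\log(m+n)/\log\log m)$ budget: summed over the tree, the $O((m/x)\log(m/x))$ new breakpoints each demand an $O(\log n)$ binary search whose comparisons now cost $O(\log^\varepsilon m)$ each. Verifying that this product is absorbed --- possibly after a slight adjustment of the block size $x$ to a suitable polylogarithmic value, so that the total number of breakpoints in $T_h$ shrinks enough to compensate the $\log^\varepsilon m$ overhead --- will be the main calculation of the proof, and is the only place where care beyond the direct analogy with Corollary~\ref{cor:3.2improvement} is needed.
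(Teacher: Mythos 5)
Your plan is the paper's: the corollary is derived from Theorem~\ref{theorem:3.4improvement} exactly as Corollary~\ref{cor:3.2improvement} is derived from Theorem~\ref{theorem:3.2improvement}, by switching the micro structures from $r=1$ to $r=x^{\varepsilon}$, and the paper supplies no more detail than you do. The difficulty you flag at the end is genuine --- with $x=\log m$ unchanged, the construction of $T_h$ over the partial Monge $M'$ costs $O\bigl((m/x)\log(m/x)\cdot\log n\cdot x^{\varepsilon}\bigr)$, which overshoots the stated budget --- and its resolution is the one you anticipate: take $x$ to be a somewhat larger power of $\log m$ and rescale $\varepsilon$ so that the explicit handling of the $O(x)$-row prefix/suffix parts and the Klawe--Kleitman corner queries still fit in $O((\log m+\log n)^{1+\varepsilon}\alpha(m+n))$.
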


\section{The Complexity of the Upper Envelope of a Totally Monotone Partial 
  Matrix}\label{sec:upperbound}
In this section we prove the following theorem, stating that the number of breakpoints of an $m\times n$ TM partial  matrix is only $O(m)$. 

\begin{theorem}\label{lemma:partial-breakpoints}
Let $M$ be a partial $m\times n$ matrix in which the defined entries in each row
and in each column are
contiguous.
If $M$ is TM (i.e., for all $i<j, k<\ell$ where
$M_{ik},M_{i\ell},M_{jk},M_{j\ell}$ are all defined, $M_{ik} \leq M_{jk}
\implies M_{i\ell} \leq M_{j\ell}$), then the upper envelope has
complexity $O(m)$.
\end{theorem}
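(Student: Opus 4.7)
My plan is to classify each envelope breakpoint at column $j$, where $r(j) = p \neq q = r(j{+}1)$, into one of three types: \textbf{(T1)} row $p$ leaves its support at column $j$ (so $r_p = j$); \textbf{(T2)} row $q$ begins its support at column $j{+}1$ (so $\ell_q = j{+}1$); or \textbf{(T3)} both $p$ and $q$ are defined at columns $j$ and $j{+}1$, in which case the TM condition forces $p < q$ and the transition is the unique pair crossover of $(p,q)$ in the $2 \times 2$ TM submatrix on rows $p,q$. Since each row has exactly one left endpoint $\ell_i$ and one right endpoint $r_i$, the T1 and T2 breakpoints contribute at most $2m$ in total, and the real task is to bound the T3 breakpoints by $O(m)$.

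For T3, I would view the rows as pseudoline-like curves (any two cross at most once on their common support, by TM) and exploit the HV-convex support structure: row- and column-contiguity imply that at every column the active set $A(j)$ is a contiguous interval $[b(j), t(j)]$ of row indices, with $b$ quasi-convex and $t$ quasi-concave in $j$, so entries and exits can occur only at the two endpoints of this interval and in a highly constrained order. In a full pseudoline arrangement each T3 strictly increases the current top-row index, and a dethroned row can never return to the top; hence at most $m-1$ T3 transitions are possible. In the partial setting, a previously-dethroned row can in principle return to the top, but only if a support event (T1 or T2) removes the blocking row, which couples T3 "returns" to support events.

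Concretely, I would set up an amortized charging in which each T3 is charged either to (a) its unique pair $(p,q)$ (each pair is charged at most once), or (b) a nearby support event, while showing that HV-convexity restricts the pairs that can actually appear on the envelope to a sparse $O(m)$-size subset of the $\binom{m}{2}$ total pairs. The intuition is that in the contiguous active interval $[b(j),t(j)]$, only pairs involving the current top and an adjacent row (within the interval) can contribute a T3, and the quasi-convex/quasi-concave evolution of $b(j)$ and $t(j)$ means these "adjacent at top" pairs form a set of size $O(m)$ over the whole column range. The main obstacle will be making this adjacency restriction rigorous in the presence of mixed entries, exits, and TM crossovers: if the direct amortization is too fragile, my fallback plan is induction on $m$, removing the row whose support boundary ($\ell_i$ or $r_i$) is innermost and arguing via HV-convexity that its removal decreases envelope complexity by at most a constant, so the $O(m)$ bound follows from the inductive hypothesis applied to the remaining partial TM matrix on $m-1$ rows.
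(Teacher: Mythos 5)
Your classification into support-boundary events (T1, T2) and genuine crossovers (T3) is a sensible start, and the $2m$ bound on T1 and T2 events is correct, as is the observation that TM forces $p<q$ at a T3 and that each pair can cross at most once. But the entire content of the theorem is the $O(m)$ bound on T3 events, and your proposal does not establish it. The "each pair is charged at most once" observation alone gives only $O(m^2)$. The potential-function view fares no better: T3s increase the envelope row index by at least one each, but a single T1/T2 event can drop the index by up to $m$, so the total downward movement available to "recharge" T3s is $\Theta(m^2)$ in the absence of further structure. Your proposed rescue --- that only pairs with $q$ \emph{adjacent} to the current top row in the active interval $[b(j),t(j)]$ can realize a T3 --- is false as stated: nothing prevents $r(j{+}1)$ from being any row of the active interval, not a neighbor of $r(j)$. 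And your fallback induction hinges on the claim that deleting one row changes the envelope complexity by $O(1)$; in a partial TM matrix a single row's winning set need not be a contiguous column interval, so this claim is itself a nontrivial assertion of essentially the same difficulty as the theorem, and you give no argument for it. So the proof has a genuine gap exactly at its load-bearing step.

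For contrast, the paper avoids reasoning about arbitrary returns to the top by first decomposing $M$ into staircase matrices (rows all starting in column 1 or all ending in column $n$), with each row and each column covered by $O(1)$ pieces. For a staircase matrix a clean per-row argument works: TM plus the staircase support structure imply each row is the entry row of at most one decreasing breakpoint and the departure row of at most one increasing breakpoint, giving $2m$ per piece. The recombination step then partitions the columns into $O(m)$ blocks with a fixed defined-row set; within a block the matrix is fully defined and TM, so its column maxima visit the (row-disjoint) staircase pieces monotonically, contributing only $O(1)$ breakpoints per block beyond those already counted in the pieces. If you want to salvage your direct approach, you would need to prove a genuine $O(m)$ bound on the set of realizable T3 pairs or a charging of T3s to support events with $O(1)$ charges each; the staircase decomposition is precisely the device that makes such a charging tractable.
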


The proof relies on a decomposition of $M$ into {\em staircase}
matrices. A partial matrix is staircase if the defined entries in its rows either all  begin in the first column or all end in the last column. 
It is well known (cf.~\cite{AggarwalK90}) that by cutting
$M$ along columns and rows, it can be decomposed into staircase
matrices $\{M_i\}$ such that each row is covered by at most three matrices,
and each column is covered by at most three  matrices.
For completeness, we  describe such a decomposition below.

\begin{lemma}
A partial matrix $M$ can be decomposed into staircase
matrices $\{M_i\}$ such that each row is covered by at most three matrices,
and each column is covered by at most three  matrices.
\end{lemma}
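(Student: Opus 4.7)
The plan is to apply a bounded number of horizontal and vertical cuts chosen from the natural ``turning points'' of the boundary of $M$'s defined region. Let $s_i,t_i$ be the leftmost and rightmost defined columns of row $i$; as in the proof of Lemma~\ref{lemma:filltheblanks}, $s_i$ is first non-increasing then non-decreasing, attaining its minimum at some row $a$, and $t_i$ is first non-decreasing then non-increasing, attaining its maximum at some row $b$. I assume WLOG that $a\le b$ (the opposite case is symmetric).

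First I would make two horizontal cuts at rows $a$ and $b$, producing a top strip $T$ (rows $1,\ldots,a-1$) in which the defined region widens downward ($s_i$ non-increasing, $t_i$ non-decreasing), a middle strip (rows $a,\ldots,b$) in which both $s_i$ and $t_i$ are non-decreasing, and a bottom strip $B$ (rows $b+1,\ldots,m$) in which the region narrows downward. The top strip is decomposed into two staircase pieces by a single vertical cut at column $s_1$: the left piece is a right-staircase since every row of $T$ satisfies $s_i\le s_1$ and $t_i\ge t_1\ge s_1$ and therefore reaches column $s_1$, while the right piece is a left-staircase since every row begins at column $s_1+1$. The bottom strip is handled symmetrically by a vertical cut at column $s_m$.

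The main obstacle is the middle strip. Its parallelogram-like region would be decomposed by vertical cuts at columns $t_a$ and $s_b$: the leftmost sub-piece (columns up to $\min(t_a,s_b)$) is a right-staircase because $t_i\ge t_a$ for every row of the middle strip, the rightmost sub-piece (columns from $\max(t_a,s_b)$) is a left-staircase because $s_i\le s_b$, and when $s_b>t_a$ the intermediate ``gap'' sub-piece is itself a parallelogram-shaped region on a strictly narrower column range, which I would handle by recursing the same construction. The main technical effort will be verifying that no row of the middle strip lies in more than three pieces of the resulting decomposition; this is done by a case analysis that distinguishes rows whose defined interval lies entirely inside one sub-piece from rows that span multiple sub-pieces, using the monotonicity of $s$ and $t$ and the column-contiguity constraint $s_{i+1}\le t_i+1$ to bound the multiplicity at each recursion level.

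Finally, the coverage is counted directly. Each row of $M$ lies in exactly one of the three strips and therefore in at most two pieces (if it lies in $T$ or $B$) or at most three pieces (if it lies in the middle strip). Each column is contained in at most one piece per strip, since within each strip the pieces partition the column range; summed over the three strips this yields at most three pieces per column, completing the proof.
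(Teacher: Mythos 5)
Your handling of the top and bottom strips matches the paper's and is fine, but the recursive ``outside-in'' vertical cutting of the middle strip does not achieve the claimed row-coverage bound, and the deferred case analysis cannot rescue it because the bound is simply false for that decomposition. Concretely, take a middle strip with six rows whose defined intervals are $[1,2]$, $[1,4]$, $[1,40]$, $[1,40]$, $[37,40]$, $[39,40]$ (both $s$ and $t$ are non-decreasing and all rows and columns are contiguous, so this is a legitimate middle strip with $a=1$, $b=6$). Your first two cuts are at $t_a=2$ and $s_b=39$, placing row $3$ in both the left piece (columns $1$--$2$) and the right piece (columns $39$--$40$) and leaving the gap columns $3$--$38$. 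Recursing on the gap cuts at columns $4$ and $37$, placing row $3$ in two further pieces plus the new gap (columns $5$--$36$), for a total of five pieces containing row $3$; nesting the pattern deeper makes the multiplicity of the central rows grow linearly with the recursion depth. The underlying problem is that every level of your recursion charges two new pieces to every row that spans the current gap, and the same central rows can span the gap at many levels. (A smaller issue: when $s_b\le t_a$ your description leaves the columns strictly between $s_b$ and $t_a$ unassigned, though these form a rectangle common to all rows of the middle strip and are easily absorbed into a neighboring piece.)

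The paper avoids this by cutting the middle parallelogram with \emph{alternating horizontal and vertical} cuts that follow its diagonal (see Figure~\ref{fig:partial}): each resulting staircase piece owns a contiguous block of rows, and consecutive pieces share at most one boundary row, so every row of the middle strip lies in at most two pieces (the paper in fact obtains at most two pieces per row overall and at most three per column). If you want to keep a recursive formulation, the recursion must alternate a row cut with a column cut so that each recursive call operates on rows essentially disjoint from those already handled, rather than repeatedly re-cutting the same central rows from both sides.
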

\begin{proof}

Let $s_i$ and $t_i$ denote the smallest and largest column index
in which an element in row $i$ is defined, respectively. 
The fact that the defined entries of $M$ are contiguous in both rows
and columns implies that the sequence $s_1, s_2, \dots, s_m$ consists of a
non-increasing prefix and a non-decreasing suffix. Similarly, the 
sequence $t_1, t_2, \dots, t_m$ consists of a
non-decreasing prefix and a non-increasing suffix. 
It follows that the rows of $M$ can be divided into three ranges - 
a prefix where $s$ is non-increasing and $t$ is non-decreasing, an infix where
both $s$ and $t$ have the same monotonicity property, and a suffix
where $s$ is non-decreasing and $t$ is non-increasing.
The defined entries in the prefix of the rows can be divided into two
staircase matrices by splitting $M$ at $t_1$, the largest column where the
first row has a defined entry. 
Similarly, the defined entries in the suffix of the rows can be divided into two
staircase matrices by splitting it at $t_m$, the largest column where the
last row has a defined entry. 
The defined entries in the infix of the rows form a double staircase
matrix. It can be broken into staircase matrices by dividing along
alternating rows and columns as shown in Figure~\ref{fig:partial}. 

\begin{figure}[h!]
   \centering
   \includegraphics[scale=0.6]{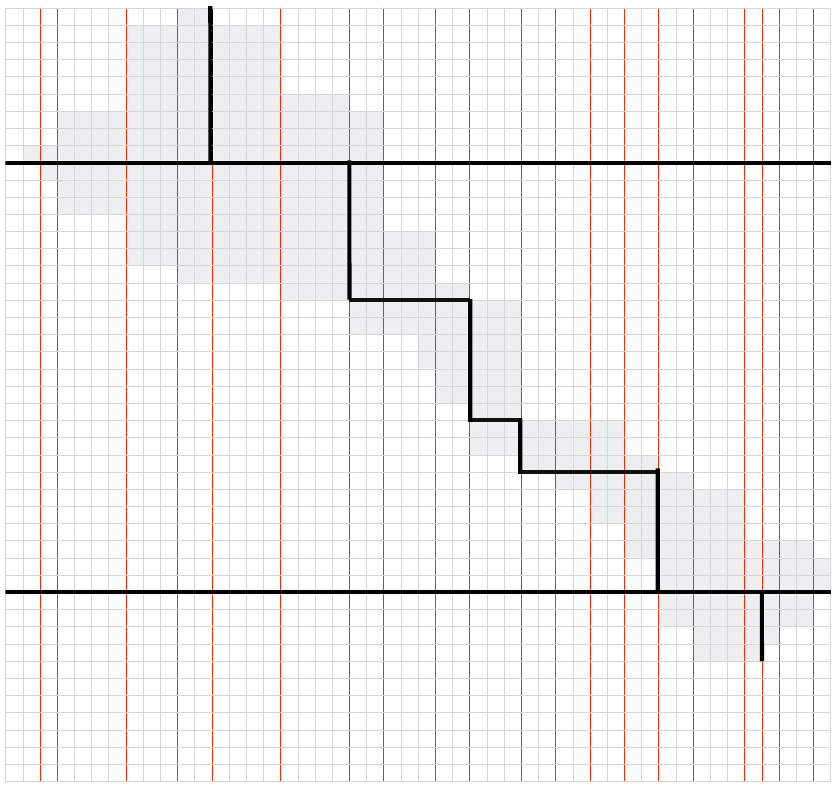}
   \caption{Decomposition of a partial matrix into staircase matrices
     (defined by solid thick black lines) and into blocks of consecutive
     columns with the same defined entries (indicated by thin vertical
     red lines).}
  \label{fig:partial}
 \end{figure}

It is easy to verify that, in the resulting decomposition, each row
is covered by at most two staircase matrices, and each column is covered
by at most three staircase matrices.
Also note that  every set of consecutive
columns whose defined elements are in exactly the same set of rows are covered
in this decomposition by the same three row-disjoint staircase matrices. \qed
\end{proof}

We next prove the fact that, if $M$ is a TM staircase
matrix with $m$ rows, then the complexity of its upper envelope is
$O(m)$. 

\begin{lemma}\label{lemma:2m:staircase}
The number of breakpoints in the upper envelope of an $m\times n$ TM staircase matrix is at
most $2m$.
\end{lemma}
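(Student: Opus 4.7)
The strategy is to show that the envelope sequence (the sequence of row indices along the upper envelope, read left to right) is a Davenport--Schinzel sequence of order $2$: it contains no alternating subsequence of the form $a,b,a,b$ with $a \ne b$. Such a sequence over $m$ symbols has length at most $2m-1$, so the envelope has at most $2m-1$ pieces and hence at most $2m-2 \le 2m$ breakpoints, as claimed.

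The key step is to rule out any $a,b,a,b$ subsequence. The staircase property gives two structural facts I will use. First, the defined domain of every row is a contiguous interval sharing one endpoint with the full column range (column $1$ for a left-anchored staircase, column $n$ for a right-anchored one), so the shared defined domain of any two rows is itself a contiguous interval. Second, on any such shared interval, total monotonicity forces the winner between the two rows to be monotone in the column index: the contrapositive of the TM condition says that if the smaller-indexed row strictly beats the larger-indexed row at column $\ell$, then it also strictly beats it at every earlier column $k$ of the shared domain; thus the smaller-indexed row wins a prefix of the shared interval and the larger-indexed row wins the rest.

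Suppose, for contradiction, that columns $c_1 < c_2 < c_3 < c_4$ witness an $a,b,a,b$ pattern. In the left-anchored case, $b$ being argmax at $c_4$ gives $t_b \ge c_4$ so row $b$ is defined on all of $[1,c_4]$, and $a$ being argmax at $c_3$ gives row $a$ defined on $[1,c_3]$. Both rows are therefore defined at $c_1,c_2,c_3$, which all lie in their shared domain, and the envelope dictates the non-monotone winner pattern $a,b,a$ on these three columns, contradicting the monotonicity just observed. The right-anchored case is symmetric: $b$ argmax at $c_2$ and $a$ argmax at $c_3$ put both rows in each other's defined domain at $c_2,c_3,c_4$, producing the non-monotone pattern $b,a,b$.

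The only real care is to translate ``row $r$ is argmax at $c$'' into the right mix of strict and non-strict inequalities between $M_{a,c}$ and $M_{b,c}$, using the convention that argmax ties are broken by the larger row index, so that the TM implication can be applied in the direction we need both when $a<b$ and when $a>b$. Once these inequalities are set up the contradiction is immediate, and the bound on $abab$-free sequences finishes the proof.
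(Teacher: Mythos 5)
Your argument is correct, and it takes a genuinely different (though related) route from the paper's. The paper classifies breakpoints as \emph{decreasing} or \emph{increasing} according to whether the envelope moves to a larger- or smaller-indexed row, and shows by two direct TM contradictions that each row is the entry row of at most one decreasing breakpoint and the departure row of at most one increasing breakpoint; charging breakpoints to rows then gives the bound $2m$. You instead show that the run-compressed sequence of envelope rows contains no alternation $a,b,a,b$, i.e.\ is a Davenport--Schinzel sequence of order $2$, and invoke $\lambda_2(m)=2m-1$, which yields the marginally sharper bound $2m-2$. The essential use of the staircase structure is the same in both proofs: any two rows' common defined domain is an interval anchored at column $1$ (or column $n$), on which total monotonicity forces the pairwise winner to switch at most once. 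What your framing buys is a single forbidden-pattern argument in place of two case analyses, at the cost of importing the (elementary) $\lambda_2$ bound. One small slip in your right-anchored case: to place row $a$ in the common domain at $c_2$ you should cite that $a$ is the argmax at $c_1\le c_2$ (hence $s_a\le c_1$), not that it is the argmax at $c_3$, which only gives $s_a\le c_3$; the fix is immediate. Your closing remark about handling ties is well taken --- the paper sidesteps this by assuming column maxima are unique, and either convention makes the TM implication go through in both orderings of $a$ and $b$.
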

\begin{proof}
We focus on the case where the defined entries of all rows begin in the first column and end in non-decreasing columns. In other words, for all $i$, $s_i$=1 and $t_i\le t_{i+1}$. The other cases are symmetric. 

A breakpoint is a situation where the maximum in column $c$ is at row
$r_1$ and the maximum in column $c+1$ is at a different row $r_2$.
We say that $r_1$ is the departure row of the breakpoint, and $r_2$ is
the  entry row of the breakpoint.
There are two types of breakpoints: decreasing ($r_1 < r_2$), and
increasing  ($r_1 > r_2$).
We show that 
(1) each row can be the entry row of at most one decreasing breakpoint, and (2) each row can be the
departure row of at most one increasing breakpoint. 
\begin{enumerate}
\item[(1)] Assume that row $r_2$ is an entry row of two decreasing
  breakpoints:  One is the pair of entries $(r_1,c_1),(r_2,c_1+1)$ and
  the other is the pair  $(r_3,c_2),(r_2,c_2+1)$. We know that
  $r_1<r_2$, $r_3<r_2$, and wlog $c_2>c_1+1$. 
Since the maximum in column $c_1+1$ is in row $r_2$, we have
$M_{r_3,c_1+1} < M_{r_2,c_1+1}$.
However, since the maximum in column $c_2$ is in row $r_3$, we have
$M_{r_3,c_2}  > M_{r_2,c_2}$, contradicting the total monotonicity of
$M$. Note that $M_{r_2,c_2}$ is defined since $M_{r_2,c_2+1}$ is defined.

\item[(2)]
Assume that row $r_1$ is a departure row of two increasing breakpoints:
One is the pair of entries $(r_1,c_1),(r_2,c_1+1)$ and the other is
the pair  $(r_1,c_2),(r_3,c_2+1)$. We know that $r_1>r_2$ and
$r_1>r_3$. 
Since the maximum in column $c_1$ is in row $r_1$, we have
$M_{r_2,c_1} < M_{r_1,c_1}$.
However, since the maximum in column $c_1+1$ is in row $r_2$, we have
$M_{r_2,c_1+1}  > M_{r_1,c_1+1}$, contradicting the total monotonicity of
$M$. Note that $M_{r_1,c_1+1}$ is defined since $M_{r_1,c_2}$ is defined.
\end{enumerate} 
\qed \end{proof}

Using Lemmas~\ref{lemma:partial-breakpoints} and~\ref{lemma:2m:staircase} we can now  complete the proof of Theorem~\ref{lemma:partial-breakpoints}.
Let $bp(M_i)$ denote the number of breakpoints in the upper envelope
of $M_i$. 
Let $m_i$ denote the number of rows in $M_i$.
Since each row appears in at most three $M_i$s, $\sum_i m_i =
O(m)$.
The total number of breakpoints in the envelopes of all of  $M_i$s is
$O(m)$ since 
$\sum_i bp(M_i) = \sum_i O(m_i) = O(m)$.

Consider now a partition of $M$ into rectangular blocks $B_j$ defined by maximal
sets of contiguous columns whose defined entries are at the same set
of rows. There are $O(m)$ such blocks.
The upper envelope of $M$ is just the concatenation of the upper
envelopes of all the $B_j$'s. Hence, $bp(M) = \sum_j bp(B_j) + O(m)$ (the
$O(m)$ term accounts for the possibility of a new breakpoint between every two
consecutive blocks). Therefore, it suffices to bound $\sum_j bp(B_j)$.

Consider some block $B_j$. As we mentioned above, the columns of $B_j$
appear in the same three row-disjoint staircase matrices $M_1,M_2,M_3$ in the decomposition of
$M$. The column maxima of $B_j$ are a subset of the column maxima of
$M_1,M_2,M_3$. Assume wlog that the indices of rows covered by $M_1$ are smaller than
those covered by $M_2$, which are smaller than those covered by $M_3$. 

The breakpoints of the upper envelope of $B_j$
are either breakpoints in the envelope of $M_1,M_2,M_3$, or
breakpoints that occur when the maxima in consecutive columns of $B_j$
originate in different $M_i$. However, since $B_j$ is a (non-partial)  TM matrix, its column maxima are
monotone. So once a column maximum originates in $M_i$, no maximum in
greater columns will ever originate in $M_j$ for $j<i$. It follows
that the number of breakpoints in $B_j$ that are not breakpoints of
$M_1,M_2,M_3$ is at most two. Since there are  $O(m)$ blocks, 
$\sum_j bp(B_j) \leq \sum_i bp(M_i) + O(m) = O(m)$. This completes the proof of Theorem~\ref{lemma:partial-breakpoints}.

\section{Constant Query-Time Data Structures}\label{sec:fast-query}
In this section we present our data structures that improve the query time to $O(1)$ at the cost of an $n^\varepsilon$ factor in the construction time and space for any constant $0<\varepsilon<1$.

We use the following micro data structure that slightly modifies the one of Lemma~\ref{lemma:micro}. 

\begin{lemma} [another micro data structure]\label{lemma:micro2}
Given a TM matrix of size $x\times n$, one can construct in $O(xn^\varepsilon\varepsilon^{-1} )$ time and space a data structure that given a query column can report the maximum entry in the entire column in 
$O(\log(\varepsilon^{-1}))$ time for any $1>\varepsilon \ge \log\log n/\log n$. 
\end{lemma}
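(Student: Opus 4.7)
My plan is a two-level application of Lemma~\ref{lemma:micro} combined with a truncated van~Emde~Boas predecessor structure. In level~0 I would invoke Lemma~\ref{lemma:micro} on the $x\times n$ matrix with parameter $r=n^{\varepsilon}$, which costs $O(x/\varepsilon)$ time and $O(x)$ space and yields $O(x)$ special columns; these partition the matrix into $O(x)$ sub-matrices of dimensions at most $n^{\varepsilon}\times n_i$ with $\sum_i n_i = O(n)$. In level~1, for every sub-matrix I would invoke Lemma~\ref{lemma:micro} again with $r=1$ to extract the at-most-$n^{\varepsilon}$ breakpoints of its upper envelope; this uses $O(n^{\varepsilon}\log n_i)$ time and $O(n^{\varepsilon})$ space per sub-matrix, summing to $O(xn^{\varepsilon}\log n)$ time and $O(xn^{\varepsilon})$ space. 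Both fit within the target $O(xn^{\varepsilon}/\varepsilon)$ because the assumption $\varepsilon\ge\log\log n/\log n$ forces $\log n\le n^{\varepsilon}/\varepsilon$.

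A query would be answered by two successive predecessor searches: one among the $O(x)$ level-0 special columns to locate the relevant sub-matrix, and one among the at-most-$n^{\varepsilon}$ level-1 breakpoints of that sub-matrix to pin down the row of the maximum. To make each search run in $O(\log(1/\varepsilon))$ rather than the customary $O(\log\log n)$, I would replace the standard van~Emde~Boas tree by one truncated after $\log(1/\varepsilon)$ halvings of the universe; this leaves every leaf cluster with universe $n^{\varepsilon}$, which I finish off with a direct-address table mapping every column in the cluster to its predecessor breakpoint. With $O(x)$ non-empty clusters in total, the direct-address tables account for $O(xn^{\varepsilon})$ additional space, while the $O(1/\varepsilon)$ non-leaf vEB levels contribute at most $O(x/\varepsilon)$, altogether within $O(xn^{\varepsilon}/\varepsilon)$.

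The main obstacle I anticipate is the space accounting for the leaf direct-address tables, which must avoid an $\Omega(n)$ blowup when most clusters are empty; the remedy is to share one table among all consecutive empty clusters that sit inside the same envelope segment, built by a single left-to-right sweep over the sorted breakpoints. The hypothesis $\varepsilon\ge\log\log n/\log n$ is used in two places: it ensures $n^{\varepsilon}\ge\log n$, so the atomic heaps backing the vEB navigation fit in a machine word and give $O(1)$ per-level descent, and it guarantees $\log(1/\varepsilon)\le\log\log n$, keeping the claimed query time consistent with classical predecessor bounds at the prescribed space budget.
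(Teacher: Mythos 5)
Your overall architecture parallels the paper's: first extract the $O(x)$ breakpoints of the upper envelope via Lemma~\ref{lemma:micro}, then build a shallow, wide predecessor structure over them whose depth is $O(\log(\varepsilon^{-1}))$. (The paper instead builds a degree-$n^{\varepsilon}$ trie of depth $O(\varepsilon^{-1})$ over the stored leaves and their ancestors, and binary-searches along the root-to-query path; your truncated van Emde Boas tree of depth $\log(\varepsilon^{-1})$ with a linear descent is a legitimate alternative.) However, there are two genuine problems. First, your time accounting for ``level~1'' is wrong: you charge $O(n^{\varepsilon}\log n_i)$ per sub-matrix and sum to $O(xn^{\varepsilon}\log n)$, and then claim this fits in $O(xn^{\varepsilon}\varepsilon^{-1})$ because $\log n\le n^{\varepsilon}/\varepsilon$. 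That inequality only yields $O(xn^{2\varepsilon}\varepsilon^{-1})$; what you actually need, $\log n\le\varepsilon^{-1}$, is false under the hypothesis (indeed $\varepsilon^{-1}\le\log n/\log\log n$). The fix is to note that the row counts $x_i$ of the sub-matrices telescope, $\sum_i x_i=O(x)$, so the second level costs only $O(x\log n)=O(xn^{\varepsilon})$ --- at which point the two-level decomposition is redundant, since a single call to Lemma~\ref{lemma:micro} with $r=1$ already produces all $O(x)$ breakpoints in $O(x\log n)$ time, which is exactly where the paper starts.

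Second, and more seriously, your vEB navigation is unsupported. At each of the $\log(\varepsilon^{-1})$ levels you must, given the high-order bits of the query, locate the corresponding cluster in $O(1)$ worst-case time among a \emph{sparse} set of non-empty clusters: storing full child arrays at the top level costs $\Theta(\sqrt{n})$ space, which exceeds the budget $O(xn^{\varepsilon}\varepsilon^{-1})$ whenever $x$ is small (e.g., $x=m^{\varepsilon/4}$ in the application of this lemma). You attribute this step to ``atomic heaps,'' but atomic heaps give $O(1)$ predecessor only on sets of polylogarithmic size, whereas a node here may have up to $O(x)$ non-empty clusters with $x$ polynomial. The missing ingredient is a worst-case $O(1)$-lookup static dictionary over the $O(x\varepsilon^{-1})$ stored nodes --- the paper uses the deterministic hash table of Hagerup et al., built in $O(x\varepsilon^{-1}\log(x\varepsilon^{-1}))$ time --- without which neither the $O(\log(\varepsilon^{-1}))$ query time nor the space bound of your structure is established. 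Your handling of queries landing in empty leaf clusters (sharing one direct-address table per run of empty clusters) is also more delicate than you acknowledge, though the standard vEB summary/min/max machinery disposes of it.
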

\begin{proof}
Recall that the data structure of Lemma~\ref{lemma:micro}, for $r=1$, finds in  $O(x \log n)$ time a set of $O(x)$ values (breakpoints) in the range $\{1,\ldots,n\}$. A query is performed in $O(pred(x,n))$ time using a standard predecessor data structure on these $O(x)$ values. Since now we  can allow an $n^\varepsilon$ factor we  use a non-standard  predecessor data structure with faster  $O(\varepsilon^{-1})$  query-time. We now describe this data structure.

Consider the complete tree of degree $n^{\varepsilon}$ over the leaves $\{1,\ldots,n\}$. We do not store this entire tree. We only store the leaf nodes corresponding to the $O(x)$ existing values and all ancestors of these leaf nodes. Since the height of the  tree is $O(\varepsilon^{-1})$ we store only $O(x\varepsilon^{-1})$ nodes. At each such node we keep two arrays, each of size $n^{\varepsilon}$. The first array stores all children pointers (including null-children). The second array stores for each child $u$ (including null-children) the value $pred(u)=$  the largest existing leaf node that appears before $u$ in a preorder traversal of the tree.

The $y= O(x\varepsilon^{-1})$ nodes are stored in a hash table. We use the static deterministic hash table of Hagerup et al.~\cite{Hash} that is constructed in $O(y \log y) = O(x\varepsilon^{-1} \log (x\varepsilon^{-1}))$ worst case time and can be queried in $O(1)$ worst case time.
Upon query, we binary-search (using the hash table) for the deepest node $v$ on the root-to-query path whose child $u$ on the root-to-query path is null.  To find the predecessor we use $v$'s second array and  return $pred(u)$.

The total construction time is $O(x \log n +  xn^{\varepsilon}\varepsilon^{-1} +x\varepsilon^{-1} \log (x\varepsilon^{-1}))$ which is $O(xn^{\varepsilon}\varepsilon^{-1})$ since we assume $\varepsilon \ge \log\log n/\log n$. The query time is $O(\log(\varepsilon^{-1}))$  since we binary-search on a path of length $\varepsilon^{-1}$ and each lookup takes $O(1)$ time using the hash table.
\qed \end{proof}

\noindent We use the above micro data structure to obtain the following  data structure.

\begin{lemma}\label{lem:skewedDS}
Given a TM $x\times n$ matrix, one can construct in $O(x^3n^\varepsilon\varepsilon^{-1} )$ time a data
structure of size $O(x^3n^\varepsilon\varepsilon^{-1} )$ that can report the maximum entry in a query
column and a contiguous range of rows in $O(\log(\varepsilon^{-1}))$ time.
\end{lemma}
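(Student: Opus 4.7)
The plan is to apply Lemma~\ref{lemma:micro2} in bulk, one copy per contiguous row range.  For every pair $(i,j)$ with $1\le i\le j\le x$, let $M^{[i,j]}$ denote the submatrix of $M$ consisting of rows $i,i+1,\ldots,j$ and all $n$ columns.  The TM property is preserved under restriction to a contiguous row range (the defining $2\times 2$ inequality is inherited), so $M^{[i,j]}$ is itself a TM $(j-i+1)\times n$ matrix and Lemma~\ref{lemma:micro2} applies to it.  I build a separate micro data structure $D^{[i,j]}$ for each of these $O(x^2)$ row ranges and store them in a two-dimensional array indexed by $(i,j)$, so that any one of them can be located in $O(1)$ time.

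To answer a query consisting of a column $c$ and a row range $[i,j]$, I simply retrieve $D^{[i,j]}$ and invoke its full-column query on column $c$.  By Lemma~\ref{lemma:micro2} this returns the maximum entry of column $c$ in $M^{[i,j]}$, which is exactly the desired answer, and it does so in $O(\log(\varepsilon^{-1}))$ time.

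It remains to bound the construction time and space.  By Lemma~\ref{lemma:micro2}, $D^{[i,j]}$ is built in $O((j-i+1)\,n^{\varepsilon}\varepsilon^{-1})$ time and occupies the same amount of space.  Summing over all ranges,
\[
\sum_{1\le i\le j\le x} (j-i+1)\,n^{\varepsilon}\varepsilon^{-1}
  \;=\; n^{\varepsilon}\varepsilon^{-1}\sum_{k=1}^{x} k\,(x-k+1)
  \;=\; O\!\bigl(x^{3}n^{\varepsilon}\varepsilon^{-1}\bigr),
\]
which matches the claimed bound.  There is no substantive obstacle here: the only things to check are (a) that contiguous row restriction preserves total monotonicity, which is immediate, and (b) that the $O(x^2)$ lookup table itself fits within the stated budget, which it does since $x^2 = O(x^3 n^{\varepsilon}\varepsilon^{-1})$.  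This completes the sketch.
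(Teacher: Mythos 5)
Your proposal is correct and is exactly the paper's construction: the paper's one-line proof also builds one copy of the Lemma~\ref{lemma:micro2} structure for each of the $O(x^2)$ contiguous row intervals, and your summation confirming the $O(x^3 n^{\varepsilon}\varepsilon^{-1})$ bound is the intended accounting. Nothing is missing.
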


\begin{proof}
For each of the $O(x^2)$  row intervals, construct the data structure of Lemma~\ref{lemma:micro2}.
\qed \end{proof}

\paragraph{\bf A constant-query subcolumn data structure.} 

\begin{lemma}\label{lemma:3.1fast-query}
Given a TM matrix of size $m\times n$, one can construct, in 
$O(mn^{\varepsilon} 
\varepsilon^{-2} ) = O(n^{1+\varepsilon} \varepsilon^{-2} )$
time and space a data structure that can report the maximum entry in a query
column and a contiguous range of rows in $O(\varepsilon^{-1} \log(\varepsilon^{-1}))$ time.
\end{lemma}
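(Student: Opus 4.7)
The plan is to build a recursive two-level structure analogous to the one in Lemma~\ref{lemma:3.1improvement}, but now using the constant-query skewed subcolumn data structure of Lemma~\ref{lem:skewedDS} on blocks, and iterating the construction $\varepsilon^{-1}$ times rather than only once. First, I would partition the $m$ rows of $M$ into $m/x$ contiguous blocks of $x = m^{\varepsilon}$ rows, and for each block apply Lemma~\ref{lem:skewedDS} to obtain a data structure that answers any subcolumn query within the block in $O(\log \varepsilon^{-1})$ time. Next, I would define the reduced matrix $M'$ of size $(m/x)\times n$ whose entry $M'_{ij}$ equals the maximum of column $j$ restricted to block $i$ of $M$; the same argument used in the proof of Lemma~\ref{lemma:3.1improvement} shows that $M'$ is TM. The matrix $M'$ is never stored explicitly -- any entry $M'_{ij}$ is retrievable in $O(\log \varepsilon^{-1})$ time by querying block $i$'s structure on column $j$ over the full row range of the block. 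I would then recursively build the same data structure on $M'$, terminating when the row dimension drops to $O(1)$ (which can be handled trivially).

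To answer a query (column $c$, row range $R$), I split $R$ into a prefix lying in one block, an infix consisting of entirely-covered blocks, and a suffix in one block. The prefix and suffix are answered directly by the two block-level structures of Lemma~\ref{lem:skewedDS} in $O(\log \varepsilon^{-1})$ time each, and the infix corresponds to a contiguous row range in $M'$ and is handled by a single recursive call at the next level. Because $x = m^{\varepsilon}$, the recursion has depth $\log_x m = \varepsilon^{-1}$, so the total query time is $O(\varepsilon^{-1} \log \varepsilon^{-1})$, as claimed.

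For the space and construction bound, at the level with $m_i$ rows I build $m_i/x$ instances of Lemma~\ref{lem:skewedDS}, each of size $O(x^{3} n^{\varepsilon} \varepsilon^{-1})$, contributing $O(m_i x^{2} n^{\varepsilon} \varepsilon^{-1})$ to the total. Since the $m_i$'s form a geometric series summing to $O(m)$, the overall bound is $O(m x^{2} n^{\varepsilon} \varepsilon^{-1}) = O(m^{1+2\varepsilon} n^{\varepsilon} \varepsilon^{-1})$; the lemma's restatement $O(n^{1+\varepsilon}\varepsilon^{-2})$ implicitly assumes $m = O(n)$, so this becomes $O(n^{1+3\varepsilon}\varepsilon^{-1})$, and rescaling $\varepsilon$ by a constant factor (which only affects the query time by a constant) absorbs the surplus $n^{2\varepsilon}$ into $n^{\varepsilon}$.

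The main subtlety I expect is the implicit representation of $M^{(i)}$ at the deeper recursion levels: constructing a Lemma~\ref{lem:skewedDS} instance for a block of $M^{(i)}$ requires accessing entries of $M^{(i)}$, and each such access costs $O(\log \varepsilon^{-1})$ via a query to the previous level's block structure, instead of the $O(1)$ access available at the base level. This multiplies the construction time above level $0$ by an $O(\log \varepsilon^{-1})$ factor; since $\log \varepsilon^{-1} = O(\varepsilon^{-1})$, the construction bound remains within the target $O(n^{1+\varepsilon}\varepsilon^{-2})$ after the constant-factor rescaling of $\varepsilon$. Verifying this parameter balancing and confirming that the $O(1)$-row base case does not dominate the bounds is the main technical work.
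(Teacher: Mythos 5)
Your proposal is correct and is essentially the paper's construction in recursive rather than tree form: your hierarchy of compressed matrices $M, M', M'', \dots$ with fixed branching $x = m^{\Theta(\varepsilon)}$ is exactly the degree-$x$ tree $T_h$ of the paper, your per-block instances of Lemma~\ref{lem:skewedDS} are the per-node structures over $\hat{M}^u$, and your prefix/infix/suffix split is the canonical-node decomposition, with the same final rescaling of $\varepsilon$ to absorb the polynomial overhead. The only (immaterial) difference is that you access entries of $M^{(i)}$ through the previous level's block structures, whereas the paper keeps a separate Lemma~\ref{lemma:micro2} micro structure at each node for that purpose.
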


\begin{proof}
The first idea is to use a degree-$x$ tree, with $x = m^{\varepsilon/4}$, instead of the binary tree $T_h$.
The height of the  tree is $O(\log m / \log x) = O(\varepsilon^{-1})$.
The leaves of the tree correspond to individual rows of $M$.
For an internal node $u$ of this tree, whose children are $u_1,u_2,
\dots, u_x$ and whose subtree contains $k$ leaves (i.e., $k$ rows), recall that $M^u$ 
is the $k \times n$ matrix  defined by these $k$ rows and all columns. 
Let  $\hat{M}^u$ be the $x\times n$ matrix whose $(i,j)$ element
is the maximum in column $j$ among the rows of $M^{u_i}$. In other words, $\hat{M}^u(i,j) = \max_{\ell}
{M}^{u_i}(\ell,j)$. 

Working bottom up, for each internal node $u$, instead of explicitly storing the matrix $M^u$ (whose size is $O(kn)$), we build the $O(kn^{\varepsilon}\varepsilon^{-1})$-sized micro data structure of Lemma~\ref{lemma:micro2}  over the $k$ rows of $M^u$. 
This way, any element $\hat{M}^u(i,j)$ can be obtained in $O( \log(\varepsilon^{-1}))$ time by querying
the data structure of $u_i$. Once we can obtain each $\hat{M}^u(i,j)$ in $O( \log(\varepsilon^{-1}))$  time, we use this to 
 construct the  data structure of Lemma~\ref{lem:skewedDS} over the $x=m^{\varepsilon/4}$ rows of $\hat{M}^u$.

Constructing the micro data structure of Lemma~\ref{lemma:micro2} for an internal node with $k$ leaf descendants takes  $O(kn^{\varepsilon}\varepsilon^{-1})$ time and  space. Summing this over all internal nodes in the tree, the total
construction takes $O(m n^{\varepsilon}\varepsilon^{-2} )$ time and space. 
After this, we construct the Lemma~\ref{lem:skewedDS} data structure  for each internal node
but we use $\varepsilon/2$ and not $\varepsilon$ so the construction takes 
$O(x^3n^{\varepsilon/2} \varepsilon^{-1} \cdot \log(\varepsilon^{-1})) = O(m^{3\varepsilon/4} n^{\varepsilon/2} \varepsilon^{-1}\log(\varepsilon^{-1}))$ time and space. The total
construction time over all $O(m/x) = O(m^{1-\varepsilon/4})$ internal nodes is thus $O(m^{1+\varepsilon/2}n^{\varepsilon/2} 
 \varepsilon^{-1}\log(\varepsilon^{-1}) ) = O(n^{1+\varepsilon}  \varepsilon^{-1}\log(\varepsilon^{-1}))$ time and space. 

We now describe how to answer a query. Given a query column and a row interval $I$, there is an induced set
of $O(\log m / \log x) = O(\varepsilon^{-1})$ canonical nodes. Each canonical node $u$ is responsible for a subinterval of  
$I$ (that includes all descendant rows of $u_i,u_{1+1}\ldots,u_j$ for some two children $u_i,u_j$ of $u$).  We find the maximum in this subinterval with one query to $u$'s Lemma~\ref{lem:skewedDS} data structure in  $O(\log(\varepsilon^{-1} ))$ time. The total query time is thus $O(\varepsilon^{-1}\log(\varepsilon^{-1} ))$.
\qed \end{proof}

\paragraph{\bf A constant query submatrix data structure.} 

\begin{theorem}\label{theorem:3.2fast-query}
Given a Monge matrix of size $m\times n$, one can construct, in 
$O(n^{1+\varepsilon}\varepsilon^{-3}\log(\varepsilon^{-1}))$ time and space, a data
structure that can report the maximum entry in a query
submatrix in $O(\varepsilon^{-2}\log(\varepsilon^{-1} ))$ time.
\end{theorem}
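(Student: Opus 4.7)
The plan is to carry out the construction of Theorem~\ref{theorem:3.2improvement} but, in the spirit of Lemma~\ref{lemma:3.1fast-query}, replace every binary decomposition by an $x$-ary one with $x = m^{\varepsilon/4}$, so that the tree has depth $O(\varepsilon^{-1})$ and any row (respectively column) range decomposes into only $O(\varepsilon^{-1})$ canonical nodes instead of $O(\log m)$. This is essential since each canonical node will now be queried in only constant time for fixed $\varepsilon$, and we need the total query time to remain $O(\varepsilon^{-2}\log(\varepsilon^{-1}))$.

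First I would partition the rows of $M$ into $m/x$ slabs of $x$ consecutive rows, forming a Monge matrix $M'$ of dimensions $(m/x)\times n$, and partition the columns symmetrically to get $M''$ of dimensions $m\times(n/x)$. On the rows of $M'$ I build an $x$-ary balanced tree $T_h$ of depth $O(\varepsilon^{-1})$. At every internal node $u$ I store: (i) the breakpoints of the upper envelope of $M^u$ (a Monge matrix, so $O(|u|)$ of them), organized in the fast predecessor structure from the proof of Lemma~\ref{lemma:micro2} so that a lookup takes $O(\log(\varepsilon^{-1}))$ time, and (ii) an $O(1)$-query RMQ array $A^u$ over the per-interval maxima. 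Symmetrically I build $T_v$ on $M''$. For the role played in Theorem~\ref{theorem:3.2improvement} by the tree $\mathcal{B}$, I use the subcolumn data structure of Lemma~\ref{lemma:3.1fast-query} on the transpose of $M$; it supports the single-row interval-maximum queries in $O(\varepsilon^{-1}\log(\varepsilon^{-1}))$ time and is built in $O(n^{1+\varepsilon}\varepsilon^{-2})$ time. Finally, to supply any element of the small correction matrix $\bar M$ used at query time, I build the micro data structure of Lemma~\ref{lemma:micro2} on each horizontal slab $M^i$ and on each vertical slab of $M''$, enabling $O(\log(\varepsilon^{-1}))$ access.

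The query itself follows Theorem~\ref{theorem:3.2improvement} verbatim: decompose $(R,C)$ into $(R',C)$ in $M'$, $(R,C')$ in $M''$, and four corner queries in $M$. For each of the $O(\varepsilon^{-1})$ canonical nodes of $T_h$ covering $R'$, use predecessor on its breakpoint list in $O(\log(\varepsilon^{-1}))$ time to locate the breakpoints inside $C$, use one RMQ on $A^u$ in $O(1)$ time for the interior, and record the two boundary rows of $M$ contributed by the two fringes. This accumulates a set $A$ of $O(\varepsilon^{-1})$ rows; the symmetric pass on $T_v$ produces a set $B$ of $O(\varepsilon^{-1})$ columns. As in Theorem~\ref{theorem:3.2improvement}, the remaining work is the maximum of the Monge matrix $\bar M$ on $A\times B$, found by SMAWK in $O(|A|+|B|)$ time; since each entry of $\bar M$ costs $O(\log(\varepsilon^{-1}))$ via the micro structures, this step takes $O(\varepsilon^{-1}\log(\varepsilon^{-1}))$. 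Adding everything, each of the four corner queries is itself a query against Lemma~\ref{lem:skewedDS}, and the total query time is $O(\varepsilon^{-2}\log(\varepsilon^{-1}))$.

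The main obstacle I expect is bookkeeping the construction time. During the bottom-up build of $T_h$, a node $u$ with $x$ children inherits a prefix and a suffix of breakpoints from its children's envelopes, but now it may also introduce up to $x$ new breakpoints at the $x{-}1$ child boundaries; each is located by an $O(\varepsilon^{-1}\log(\varepsilon^{-1}))$ query to $\mathcal{B}$. Since the sum of subtree sizes over all nodes of $T_h$ is $O((m/x)\cdot\varepsilon^{-1})$, the total number of breakpoints processed across $T_h$ is $O(m\varepsilon^{-2}/x)$; installing each one into the Lemma~\ref{lemma:micro2}-style predecessor structure costs an additional $n^{\varepsilon'}\varepsilon'^{-1}$ factor, and the auxiliary micro data structures on the slabs cost $O(mn^{\varepsilon'}\varepsilon'^{-1})$ each. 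Choosing $\varepsilon'=\varepsilon/2$ and carrying out the same accounting symmetrically for $T_v$ and for $\mathcal{B}$ gives a total construction time and space of $O(n^{1+\varepsilon}\varepsilon^{-3}\log(\varepsilon^{-1}))$, matching the statement.
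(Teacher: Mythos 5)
Your high-level plan --- port the construction of Theorem~\ref{theorem:3.2improvement} to a degree-$x$ tree with $x=m^{\varepsilon/4}$ --- runs into two concrete problems that the paper's proof avoids by using a different architecture. First, the \emph{four corner queries}: in Theorem~\ref{theorem:3.2improvement} the leftover ranges $(R_i,C_j)$, $i,j\in\{p,s\}$ are $O(\log m)\times O(\log n)$ submatrices of $M$ and SMAWK disposes of them in $O(\log m+\log n)$ time, which is affordable there. In your setting they become $O(m^{\varepsilon/4})\times O(n^{\varepsilon/4})$ \emph{submatrix} queries on $M$, so SMAWK costs polynomial time, and Lemma~\ref{lem:skewedDS} cannot absorb them as you claim: it answers subcolumn queries (one column, a row interval), not two-dimensional range queries. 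You would need an entire submatrix-query structure inside each $x\times x$ block, which is essentially the problem you set out to solve. Second, the \emph{canonical decomposition}: in a degree-$x$ tree a row range decomposes into $O(x\varepsilon^{-1})$ maximal full subtrees, not $O(\varepsilon^{-1})$. To get $O(\varepsilon^{-1})$ units you must store, at every internal node, the envelope and RMQ for every contiguous interval of its children ($O(x^2)$ intervals per node, as in Lemma~\ref{lem:skewedDS}); you store only one envelope and one RMQ array $A^u$ per node, so the query as described cannot run in the claimed time. A related, smaller gap is your access to $\bar M$: Lemma~\ref{lemma:micro2} on a slab returns the maximum of an \emph{entire} column of that slab, whereas $\bar M_{ij}$ is a maximum over a range of $x$ columns of a slab; recovering it requires the envelope-plus-RMQ-plus-fringe-rows six-value argument of Theorem~\ref{theorem:3.2improvement}, with the atomic heap replaced by the fast predecessor structure.

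The paper's proof abandons the $M'/M''/\bar M$/corner machinery altogether. It builds the degree-$x$ tree directly on the rows of $M$ (leaves are single rows), so the canonical decomposition covers $R$ exactly and no corners arise. Each internal node $u$ stores the Lemma~\ref{lemma:micro2} access to $\hat M^u$, plus breakpoints and RMQs for \emph{all} $O(x^2)$ child intervals, which makes the number of canonical units $O(\varepsilon^{-1})$. The only leftovers at a canonical node are the two column intervals $C',C''$ straddling the boundary of $C$; each is a query on a \emph{single row} of $\hat M^u$ and a column range, answered by a per-level subcolumn structure $\mathcal B_\ell$ (an instance of Lemma~\ref{lemma:3.1fast-query} on the level-$\ell$ aggregated matrix) in $O(\varepsilon^{-1}\log(\varepsilon^{-1}))$ time, giving $O(\varepsilon^{-2}\log(\varepsilon^{-1}))$ overall. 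If you want to salvage your route you would have to add the all-child-intervals storage at every node and devise a genuine mechanism for the polynomial-size corners; as written, the proposal does not establish the claimed query bound.
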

\begin{proof}
As in the proof of Lemma~\ref{lemma:3.1fast-query}, we construct
a degree-$x$ tree $T_h$ over the rows of $M$, with $x =
m^{\varepsilon/4}$. 
 Recall that $T_h$ 
includes, for each internal node $u$, (i) the data
structure of Lemma~\ref{lemma:micro2}, which enables queries to
elements of $\hat{M}^u$ in $O(\log(\varepsilon^{-1}))$ time, and (ii) the breakpoints of
all possible row intervals of the $x\times n$ matrix
$\hat{M}^u$. In addition to the breakpoints we store, for each of these
$O(x^2)$ intervals, a RMQ data structure over the maximum elements 
between breakpoints. The construction of those RMQ data structures is
described in the sequel. 

For each level $\ell>0$ of the $O(\varepsilon^{-1})$ levels of the
tree $T_h$ (the leaves of $T_h$ are considered to be at level 0), we construct the symmetric data structure of
Lemma~\ref{lemma:3.1fast-query} over the $(m/{x^{\ell-1}})\times n$ matrix
formed by the union of $\hat{M}^u$ over all level-$i$ nodes $u$ in
$T_h$.
We denote these data structures by $\mathcal
B_\ell$. 
Their construction takes total $O(\varepsilon^{-1}\cdot  mn^{\varepsilon} 
\varepsilon^{-2}  \cdot
\log(\varepsilon^{-1})) = O(n^{1+\varepsilon} \varepsilon^{-3}\log(\varepsilon^{-1}))$ time and space. 
For notational convenience we define $B_0$ to be equal to $B_1$.

We now describe how to construct the RMQ data structures for an
internal node $u$ at level $\ell$ of $T_h$ with children
$u_1, \dots, u_x$.  We describe how to construct the RMQ for the interval consisting
of all rows of $\hat{M}^u$. Handling the other intervals is similar.
We need to show how to list the maximum among the column maxima of $\hat{M}^u$ between every two
consecutive breakpoints of $\hat{M}^u$.
All the column maxima between any two consecutive breakpoints are
contributed by a single known child $u'$ of $u$. In other words, we
are looking for the maximum element in the range
consisting of a single row of $\hat{M}^u$ and the range of columns between the two
breakpoints. This maximum can be found by querying the $\mathcal B_\ell$ data structure
in $O(\varepsilon^{-1}\log(\varepsilon^{-1}))$ time.
There are $O(x)$ such queries for each of the $O(x^2)$ intervals at each of the $O(m/x)$
internal nodes. Therefore, the total construction
time of the RMQs is $O(m^{1+\varepsilon}\cdot \varepsilon^{-1}\log(\varepsilon^{-1}))$.  
This completes the description of our data structure. 

We finally discuss how to answer a query (a range in $M$ of  rows  $R$ and  columns $C$). A query induces a set of $O(\varepsilon^{-1})$
canonical nodes $u$. For a canonical node $u\in T_h$ and an
induced row interval $R_u$, we
use the list of breakpoints of $R_u$ in $\hat{M}^u$ to
identify the breakpoints that are fully contained in
$C$. This takes $O(\log(\varepsilon^{-1}))$ time. The maximum element in those columns is found by querying the RMQ
data structure of $R_u$ in $u$. In addition to that, there are at most
two column intervals $C'$ and $C''$ in $\hat{M}^u$
that intersect $C$ but are not fully contained in $C$. The maximum in
$C' \cap C$ and $C'' \cap C$ is contributed by two known children
$u',u''$ of $u$, respectively.
In other words, each of them is the maximum element in the range consisting of a single row of $\hat{M}^u$ and a range of columns. If $u$ is a level-$\ell$ node of the tree then we find them by two queries to $\mathcal B_\ell$: one for the row of $u'$ and columns $C'$ and one for the row of $u''$ and columns $C''$.
The total query time is thus 
$O(\varepsilon^{-1}\cdot \varepsilon^{-1}\log(\varepsilon^{-1}))= O( \varepsilon^{-2}\log(\varepsilon^{-1}))$.
\qed \end{proof}

\bibliographystyle{plain}

\end{document}